\newtheorem{theorem}{Theorem}
\newtheorem{fact}{Fact}
\newtheorem{lemma}{Lemma}
\newtheorem{definition}{Definition}
\newcommand{\mc}{\mathcal}
\newcommand{\mbb}{\mathbb}
\newcommand{\comments}[1]{}
\begin{document}
\newcommand{\bignorm}[1]{\big\| #1 \big\|}
\newcommand{\Bignorm}[1]{\Big\| #1 \Big\|}

\title{BELT: Block Encoding of Linear Transformation on Density Matrices}
\begin{abstract}
Linear maps that are not completely positive play a crucial role in the study of quantum information, yet their non-completely positive nature renders them challenging to realize physically. The core difficulty lies in the fact that when acting such a map $\mc{N}$ on a state $\rho$, $\mc{N}(\rho)$ may not correspond to a valid density matrix, making it difficult to prepare directly in a physical system.
We introduce Block Encoding of Linear Transformation (BELT), a systematic protocol that simulates arbitrary linear maps by embedding the output $\mc{N}(\rho)$ into a block of a unitary operator. BELT enables the manipulation and extraction of information about $\mc{N}(\rho)$ through coherent quantum evolution. Notably, BELT accommodates maps that fall outside the scope of quantum singular value transformation, such as the transpose map.
BELT finds applications in entanglement detection, quantum channel inversion, and simulating pseudo-differential operators, and demonstrates improved sample complexity compared to protocols based on Hermitian-preserving map exponentiation.
\end{abstract}

\date{\today}

\author{Fuchuan Wei}
\affiliation{Yau Mathematical Sciences Center, Tsinghua University, Beijing 100084, China}
\affiliation{Department of Mathematics, Tsinghua University, Beijing 100084, China}

\author{Rundi Lu}
\affiliation{Yau Mathematical Sciences Center, Tsinghua University, Beijing 100084, China}
\affiliation{Department of Mathematics, Tsinghua University, Beijing 100084, China}

\author{Yuguo Shao}
\affiliation{Yau Mathematical Sciences Center, Tsinghua University, Beijing 100084, China}
\affiliation{Department of Mathematics, Tsinghua University, Beijing 100084, China}

\author{Junfeng Li}
\affiliation{Yau Mathematical Sciences Center, Tsinghua University, Beijing 100084, China}
\affiliation{Department of Mathematics, Tsinghua University, Beijing 100084, China}

\author{Jin-Peng Liu}
\email{liujinpeng@tsinghua.edu.cn}
\affiliation{Yau Mathematical Sciences Center, Tsinghua University, Beijing 100084, China}
\affiliation{Yanqi Lake Beijing Institute of Mathematical Sciences and Applications, Beijing 100407, China}

\author{Zhengwei Liu}
\email{liuzhengwei@mail.tsinghua.edu.cn}
\affiliation{Yau Mathematical Sciences Center, Tsinghua University, Beijing 100084, China}
\affiliation{Department of Mathematics, Tsinghua University, Beijing 100084, China}
\affiliation{Yanqi Lake Beijing Institute of Mathematical Sciences and Applications, Beijing 100407, China}

\maketitle

\section{Introduction}

A legitimate quantum operation must map a density matrix to another density matrix, requiring it to be completely positive and trace-preserving (CPTP) \cite{nielsen2010quantum,watrous2018theory}.
However, linear maps that are not completely positive (non-CP) frequently arise in many quantum information contexts \cite{Pechukas1994reduced,gunhe2009entanglement,cai2022quantum,endo2021hybrid,wei2023realizing,Son2025Dynamic}.
In entanglement detection~\cite{gunhe2009entanglement}, entanglement can be determined by the positivity of output matrices generated by applying a positive but non-CP map to subsystems.
In scenarios when the inverse of a quantum channel should be performed, e.g., quantum error mitigation \cite{cai2022quantum,endo2021hybrid}, the inverse process becomes non-CP if the original channel is non-unitary. 
In non-Markovian quantum dynamics \cite{Pechukas1994reduced, Salgado2004evolution, carteret2008dynamics}, correlations between the input state and the environment cause the non-CP processes.

Simulating non-CP maps in physical protocols becomes a crucial task in advanced studies. Quasiprobability sampling \cite{temme2017mitigation,endo2018practical,Zhao2025Power} is popularly utilized to simulate non-CP maps by expressing the target non-CP map as a quasiprobability mixture of several CPTP maps, and retrieving the desired expectation value of the output of the non-CP map at the cost of some sampling overhead. However, such a method cannot prepare the output state of the non-CP map, and thus it has limited power when dealing with tasks such as sampling from the output state, quantum communication, and quantum storage.
Other methods like structural approximation \cite{horodecki2002direct,korb2008structural} and $N$-copy extension \cite{dong2019positive} are designed for some special kinds of non-CP maps and lack generality.

As a versatile tool and a building-block of many quantum algorithms \cite{Harrow2009HHL,gilyn2019singular,Dong_2021,Costa2022optimal,Gilyen2022Petz,Low2024Eigenvalue,chakraborty2025quantumsingularvaluetransformation,niwa2025singularvaluetransformationunknown}, the technique of block encoding \cite{Low2017optimal} embeds a matrix $A$ as a submatrix or ``block" within a larger, unitary matrix $U_A$, which is realizable by a quantum computer.
This approach leverages the structure of quantum computing to indirectly access and manipulate the properties of the nonunitary matrix $A$, leading to various non-unitaries processes with improved computational efficiency.
Given access to the unitary that prepares the purification of a state $\rho$, and the inverse of this unitary, one can obtain $\rho$'s block encoding, with applications in Hamiltonian simulation \cite{Low2019hamiltonian,gilyn2019singular} and performing the Petz recovery map  \cite{Gilyen2022Petz}. 
Recent work pushes block encoding beyond matrices to time-evolution operators. Both the linear-combination-of-Hamiltonian-simulation (LCHS) framework \cite{ALL23,ACL23,ACLY24,LLLL25} and the Schrödingerisation technique \cite{JLY23,JLLY23,JLLY24} offer what can be viewed as a continuous block encoding, embedding an entire evolution inside a smoothly parameterized unitary family. Together, these complementary approaches elevate block encoding to a unified toolkit for simulating non-Hermitian quantum dynamics and time-dependent linear differential equations.

\begin{figure*}[t]
\centering
\includegraphics[width=0.95\textwidth]{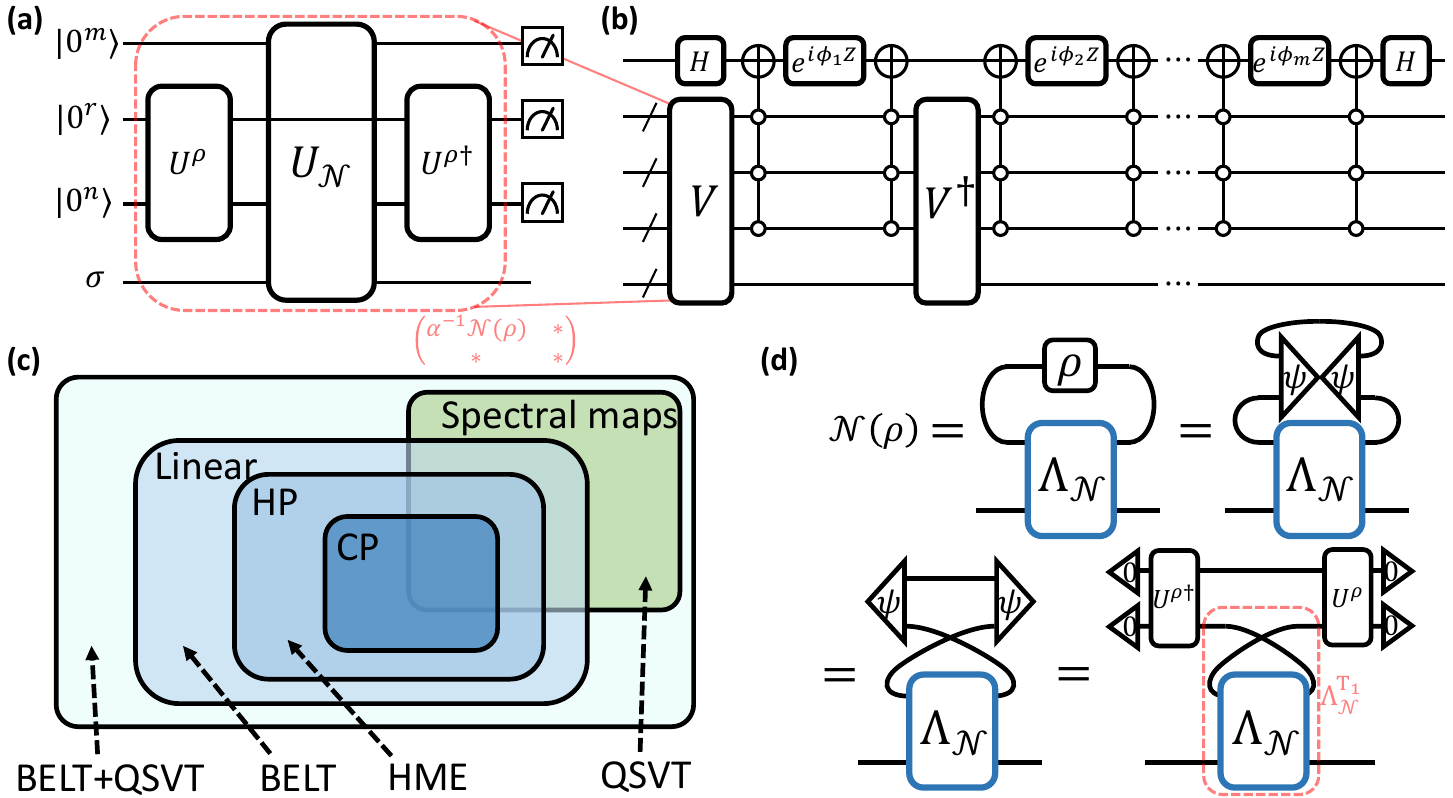}
\caption{(a) Circuit for BELT, which yields an $(\alpha,m+r+n,\epsilon)$-block encoding of $\mc{N}(\rho)$. Here $U_{\mc{N}}$ is an $(\alpha,m,\epsilon)$-block encoding of $\Lambda_{\mc{N}}^{\mathrm T_1}$. The circuit in (a) can be inserted into the standard QSVT circuit in (b) to obtain a block encoding of $f(\alpha^{-1}\mc{N}(\rho))$, where $f:\mbb{R}\rightarrow\mbb{C}$ is a spectral function. For instance, choosing $f$ to be a Chebyshev polynomial implements oblivious amplitude amplification.
(c) Relationships among classes of transformations acting on density matrices in $L(\mc{H})$. ``Linear" denotes the set $L(L(\mc{H}))$. Linear maps can be simulated by BELT (Theorem~\ref{thm:main}). ``HP" denotes Hermitian-preserving linear maps, which can be simulated by the HME algorithm~\cite{wei2023realizing}. ``CP" denotes completely positive linear maps. ``Spectral maps" denote transformations $f:\mbb{R}\rightarrow\mbb{C}$ that act on singular values and are realizable by QSVT. Combining BELT with QSVT enables a broader family of transformations, e.g., the map $T_N \circ \bignorm{\Lambda_{\mc{E}^{-1}}^{\mathrm T_1}}_{\infty}^{-1}\mc{E}^{-1}$ used in Theorem~\ref{thm:inverting} to invert $\mc{E}$.
(d) Tensor network proof of Theorem~\ref{thm:main}. Matrices are depicted as boxes with left (row) and right (column) legs, and vectors/dual vectors are depicted as triangles.
}
\label{fig:topfigure}
\end{figure*}

In this work, to overcome the restriction that non-CP maps generally do not produce valid density matrices,
we introduce Block Encoding of Linear Transformation (BELT). This quantum algorithm can simulate an arbitrary linear map $\mc{N}$ acting on an input state $\rho$, by block encoding the output $\mc{N}(\rho)$, thereby transforming the non-physical object $\mc{N}(\rho)$ into a physically realizable unitary.
BELT generalizes the technique of block encoding $\rho$ by replacing the swap operator with a unitary that encodes the action of $\mc{N}$.
We demonstrate several applications of BELT. 
First, when $\mathcal{N}$ is a positive but not completely positive map, BELT enables an efficient entanglement-detection protocol, whose sample complexity shows a constant-versus-exponential separation compared to any single-copy protocol~\cite{wei2023realizing,liu2024separation}.  
Second, by choosing $\mathcal{N} = \mathcal{E}^{-1}$ to be the inverse map of a CPTP map $\mc{E}$, BELT implements the channel inversion, allowing one to reconstruct an unknown state $\psi$ from multiple copies of $\mathcal{E}(\psi)$.
Third, interpreting $\mathcal{N}$ as a non-unitary evolution operator lets BELT simulate non-Hermitian dynamics; in particular, we showcase its ability to realize pseudo-differential operators.
Compared to the Hermitian-preserving map exponentiation (HME) algorithm~\cite{wei2023realizing}, BELT extends the scope of simulable maps to include those that are not Hermitian-preserving, while also offering improved sample complexities in the tasks mentioned above (see Table~\ref{tab:HME_BE_comparison}).

We recall that quantum singular value transformation (QSVT) \cite{gilyn2019singular,martyn2021grand} enables an approximate block-encoding of $f(\rho)$, where $f:\mathbb{R}\to\mathbb{C}$ acts on the spectrum of $\rho$.
However, general linear maps are not, in general, spectral functions of density operators. 
For example, the transpose map $\rho\mapsto\rho^{\mathrm T}$ is basis dependent and cannot be written as $f(\rho)$ for any scalar function $f$.
Consequently, BELT is distinct from QSVT and can simulate transformations that standard QSVT-based methods cannot access.
Used together, BELT and QSVT provide complementary capabilities, namely linear maps and spectral functions, thereby enlarging the class of implementable transformations (see Fig.~\ref{fig:topfigure}(c)).


\section{Block encoding of linear transformation}

When a non-CP linear map acts on a quantum state, the output state may lose positivity and hence fail to represent a valid quantum state.  This obstacle prevents a direct physical implementation of non-CP maps, as we cannot physically prepare their output.  
To avoid the non-physical nature of non-CP maps, one possible idea is to change the carrier of the output state.
Specifically, instead of preparing $\mc{N}(\rho)$ itself, we encode it coherently as a submatrix of an enlarged unitary operator.

In this section, we show how to construct such a unitary for an arbitrary but known linear map $\mc{N}$ (which may be non-CP) and an unknown input state $\rho$.  Throughout, let $L(\mc{H})$ denote the space of linear operators on a Hilbert space $\mc{H}$, $\mbb{I}_n$ the identity on $n$ qubits, and $\|\cdot\|_\infty$ the operator norm (i.e., the largest eigenvalue).  We begin by recalling the formal definition of block encoding:

\begin{definition}[Block encoding \cite{lin2022lecture}]
For an $n$-qubit matrix $A$, if there exists $\alpha,\epsilon\in\mbb{R}_{\ge0}$, and an $(m+n)$-qubit unitary $U\in L(\mc{H}'\otimes\mc{H})$ such that
\begin{equation}
\bignorm{A-\alpha(\bra{0^m}\otimes\mbb{I}_{n})U(\ket{0^m}\otimes\mbb{I}_{n})}_{\infty}\le\epsilon,
\end{equation}
we say the unitary $U$ is an $(\alpha,m,\epsilon)$-block encoding of $A$.
\end{definition}

A block encoding of $A$ exists iff $\|A\|_\infty\le\alpha$.  
When the unitary $U$ is expressed in the computational basis, the upper-left block of $U$ equals $A/\alpha$ up to $\epsilon$ error.

For $\mc{N}:L(\mc{X})\rightarrow L(\mc{Y})$ a linear map, let $\Lambda_{\mc{N}}=(\mc{I}\otimes\mc{N})(\Phi^{+})$,
where $\ket{\Phi^{+}}=\sum_{i}\ket{ii}$ is the unnormalized maximally entangled state on $\mc{X}\otimes\mc{X}$.
The matrix $\Lambda_{\mc{N}}$ is called the Choi matrix of $\mc{N}$ and contains all the information of the map $\mc{N}$.

\begin{theorem}[BELT]\label{thm:main}
Let $\mc N : L(\mathbb C^{2^{n}})\to L(\mathbb C^{2^{k}})$ be a linear map.
Let $U^{\rho}$ be an oracle that prepares a purification of an $n$-qubit state $\rho$, i.e.\ $U^{\rho}\ket{0^{r+n}}=\ket{\psi}$ with
$\Tr_{\mathbb C^{2^{r}}}\ket{\psi}\bra{\psi}=\rho$.
Assume $U_{\mc N}$ is an $(\alpha,m,\epsilon)$-block encoding of the
partially transposed Choi matrix
$\Lambda_{\mc N}^{\mathrm T_1}$, where ${\mathrm T_1}$ denotes the partial transpose on the first subsystem.
Then
\begin{equation}\label{eq:circuit}
  \bigl(\mathbb I_m\otimes U^{\rho\dagger}\otimes\mathbb I_k\bigr)
  \bigl(U_{\mc N}\otimes\mathbb I_r\bigr)
  \bigl(\mathbb I_m\otimes U^{\rho}\otimes\mathbb I_k\bigr)
\end{equation}
is an $(\alpha,m+r+n,\epsilon)$-block encoding of $\mc N(\rho)$.
\end{theorem}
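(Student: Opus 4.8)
The plan is to compute the top-left block of the circuit $V$ in Eq.~\eqref{eq:circuit} in closed form and then control the approximation error. It helps to name the four registers involved: let $A$ be the $m$-qubit ancilla of $U_{\mc N}$, let $B$ ($n$ qubits) and $C$ ($k$ qubits) be the first and second tensor factors of $\Lambda_{\mc N}^{\mathrm T_1}$, and let $D$ ($r$ qubits) be the purifying register, so that $U^{\rho}$ acts on $D\otimes B$ with $\Tr_D\ket{\psi}\bra{\psi}=\rho$ on $B$, $U_{\mc N}$ acts on $A\otimes B\otimes C$, and the trailing $\mbb{I}_r$ acts on $D$. The $m+r+n$ flag qubits of the asserted block encoding are $A$, $D$, $B$, while $C$ carries the output; the goal is thus $\bignorm{\mc N(\rho)-\alpha\bigl(\bra{0^{m+r+n}}_{ADB}\otimes\mbb{I}_C\bigr)V\bigl(\ket{0^{m+r+n}}_{ADB}\otimes\mbb{I}_C\bigr)}_{\infty}\le\epsilon$.

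First I would peel off the two outer layers. Since $\bra{0^{r+n}}_{DB}U^{\rho\dagger}=\bra{\psi}_{DB}$ (and, on the input side, $U^{\rho}\ket{0^{r+n}}_{DB}=\ket{\psi}_{DB}$), and since $\bra{0^m}_A$ and $\ket{0^m}_A$ touch only $A$ and can therefore be absorbed into $U_{\mc N}$, the block collapses to
\begin{equation}
\bigl(\bra{\psi}_{DB}\otimes\mbb{I}_C\bigr)\bigl(\widetilde\Lambda_{BC}\otimes\mbb{I}_D\bigr)\bigl(\ket{\psi}_{DB}\otimes\mbb{I}_C\bigr),
\end{equation}
where $\widetilde\Lambda=\bigl(\bra{0^m}\otimes\mbb{I}_{BC}\bigr)U_{\mc N}\bigl(\ket{0^m}\otimes\mbb{I}_{BC}\bigr)$ is the operator encoded by $U_{\mc N}$, so $\bignorm{\Lambda_{\mc N}^{\mathrm T_1}-\alpha\widetilde\Lambda}_{\infty}\le\epsilon$ by hypothesis. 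Expanding $\ket{\psi}$ in a product basis of $D\otimes B$ and contracting the $D$ legs, using $\Tr_D\ket{\psi}\bra{\psi}=\rho$, shows that for any operator $X$ on $B\otimes C$ this ``$\psi$-sandwich'' equals $\Tr_B\bigl[(\rho\otimes\mbb{I}_C)X\bigr]$; hence the block is precisely $\Tr_B\bigl[(\rho\otimes\mbb{I}_C)\widetilde\Lambda\bigr]$.

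The exact case ($\epsilon=0$, $\widetilde\Lambda=\alpha^{-1}\Lambda_{\mc N}^{\mathrm T_1}$) then reduces to the Choi reconstruction formula in partially transposed form, $\Tr_B\bigl[(\rho\otimes\mbb{I}_C)\Lambda_{\mc N}^{\mathrm T_1}\bigr]=\mc N(\rho)$, which follows directly from $\Lambda_{\mc N}^{\mathrm T_1}=\sum_{ij}\ket{j}\bra{i}\otimes\mc N(\ket{i}\bra{j})$ (or can be read off from the tensor network in Fig.~\ref{fig:topfigure}(d)). This is the step where the partial transpose is crucial: $\mathrm T_1$ is exactly what turns the entangled ``wire'' defining the Choi matrix into a direct contraction against $\rho$, so BELT can realize maps that are neither completely positive nor Hermitian-preserving without ever instantiating the (possibly non-physical) operator $\mc N(\rho)$ itself.

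For the error, observe that the block is $\Tr_B[(\rho\otimes\mbb{I}_C)\widetilde\Lambda]$ while $\alpha^{-1}\mc N(\rho)=\Tr_B[(\rho\otimes\mbb{I}_C)(\alpha^{-1}\Lambda_{\mc N}^{\mathrm T_1})]$, so their difference is $\Tr_B[(\rho\otimes\mbb{I}_C)E]$ with $E:=\widetilde\Lambda-\alpha^{-1}\Lambda_{\mc N}^{\mathrm T_1}$ and $\bignorm{E}_{\infty}\le\alpha^{-1}\epsilon$. Rewriting this difference in $\psi$-sandwich form and using submultiplicativity together with $\bignorm{\ket{\psi}}^{2}=\langle\psi|\psi\rangle=\Tr\rho=1$,
\begin{equation}
\bignorm{\Tr_B[(\rho\otimes\mbb{I}_C)E]}_{\infty}=\bignorm{\bigl(\bra{\psi}\otimes\mbb{I}_C\bigr)\bigl(E_{BC}\otimes\mbb{I}_D\bigr)\bigl(\ket{\psi}\otimes\mbb{I}_C\bigr)}_{\infty}\le\bignorm{\ket{\psi}}^{2}\bignorm{E}_{\infty}\le\alpha^{-1}\epsilon,
\end{equation}
so that $\bignorm{\mc N(\rho)-\alpha\,(\text{block})}_{\infty}\le\epsilon$, which is the assertion. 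I expect the only genuine obstacle to be clean register bookkeeping through the three layers of $V$ — the tensor-network picture makes this transparent — whereas the $\psi$-sandwich identity and the norm estimate are routine. As a consistency check, taking $\mc N=\mc I$ gives $\Lambda_{\mc I}^{\mathrm T_1}=\sw$ (the swap), and the construction collapses to the standard block encoding of $\rho$ obtained by conjugating $\sw$ with the purification oracle: BELT is that construction with $\sw$ replaced by $U_{\mc N}$.
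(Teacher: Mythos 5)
Your proposal is correct and follows essentially the same route as the paper's proof: you establish the contraction identity $\mc N(\rho)=\bigl(\bra{\psi}\otimes\mbb{I}_k\bigr)\bigl(\mbb{I}_r\otimes\Lambda_{\mc N}^{\mathrm T_1}\bigr)\bigl(\ket{\psi}\otimes\mbb{I}_k\bigr)$ (your $\psi$-sandwich plus Choi-reconstruction step is exactly the tensor-network identity of Fig.~\ref{fig:topfigure}(d)), then substitute the encoded block $\widetilde\Lambda$ for $\alpha^{-1}\Lambda_{\mc N}^{\mathrm T_1}$ and bound the error by submultiplicativity using $\norm{\ket{\psi}}=1$, just as in Appendix~\ref{app:proof_of_main}.
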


The full proof is given in Appendix~\ref{app:proof_of_main}; below we sketch the $\epsilon=0$ case.
Applying the tensor-network identities in Fig.~\ref{fig:topfigure}(d) yields
\begin{equation}\label{eq:sketch_main}
  \mc N(\rho)=\bigl(\bra{0^{r+n}}U^{\rho\dagger}\otimes\mathbb I_k\bigr)
    \bigl(\mathbb I_r\otimes\Lambda_{\mc N}^{\mathrm T_1}\bigr)
    \bigl(U^{\rho}\ket{0^{r+n}}\otimes\mathbb I_k\bigr).
\end{equation}
Because
$\Lambda_{\mc N}^{\mathrm T_1} = \alpha(\bra{0^{m}}\otimes\mathbb I_{n+k})U_{\mc N}(\ket{0^{m}}\otimes\mathbb I_{n+k})$,
the only non-unitary element $\Lambda_\mc{N}^{\mathrm{T}_1}$ in Eq.~\eqref{eq:sketch_main} can be replaced by the unitary $U_{\mc N}$, establishing that Eq.~\eqref{eq:circuit} is indeed an $(\alpha,m+r+n,0)$-block encoding of $\mc N(\rho)$.

The circuit corresponding to Eq.~\eqref{eq:circuit} is shown in Fig.~\ref{fig:topfigure}(a).
Denote
$ V :=
  \bigl(\mathbb I_m\otimes U^{\rho\dagger}\otimes\mathbb I_k\bigr)
  \bigl(U_{\mc N}\otimes\mathbb I_r\bigr)
  \bigl(\mathbb I_m\otimes U^{\rho}\otimes\mathbb I_k\bigr)$.
By Theorem~\ref{thm:main}, for $\epsilon=0$, inputting the state
$\ketbra{0^{m+r+n}}{0^{m+r+n}}\otimes\sigma$
and post-selecting on outcome $\ket{0^{m+r+n}}$ on the first $m+r+n$ qubits produces the state
\begin{equation}
\frac{\mc N(\rho)\sigma\mc N(\rho)^{\dagger}}
       {\Tr\bigl[\mc N(\rho)\sigma\mc N(\rho)^{\dagger}\bigr]}
\end{equation}
on the final $k$ qubits. The success probability is
\begin{align}
&\Pr\bigl[\text{get }\ket{0^{m+r+n}}\bigr]=\\
&\Tr\bigl[
     \bra{0^{m+r+n}}\!V\! \ket{0^{m+r+n}}
     \sigma
     \bra{0^{m+r+n}}\!V^{\dagger}\! \ket{0^{m+r+n}}
    \bigr]\\
&= \alpha^{-2}
    \Tr\bigl[\mc N(\rho)\sigma\mc N(\rho)^{\dagger}\bigr].
\end{align}
Since $U_{\mc N}$ is an $(\alpha,m,\epsilon)$-block encoding, we have
$\alpha\ge\|\Lambda_{\mc N}^{\mathrm T_1}\|_\infty$; hence the success probability of obtaining $\ket{0^{m+r+n}}$ is bounded above by
$\|\Lambda_{\mc N}^{\mathrm T_1}\|_\infty^{-2}\Tr\bigl[\mc N(\rho)\sigma\mc N(\rho)^{\dagger}\bigr]$.
Therefore, when identifying applications of BELT, it is preferable to consider maps $\mc N$ for which the quantity $\|\Lambda_{\mc N}^{\mathrm T_1}\|_\infty^{-2}$ remains bounded or grows only moderately with system size, as demonstrated in the following sections.
With a suitable choice of $\sigma$, the probability $\alpha^{-2}\Tr\bigl[\mc N(\rho)\sigma\mc N(\rho)^{\dagger}\bigr]$ reveals information about $\mc N(\rho)$. When $\mc N(\rho)$ is pure, the post-selected state $\frac{\mc{N}(\rho)\sigma \mc{N}(\rho)^{\dagger}}{\Tr(\mc{N}(\rho)\sigma \mc{N}(\rho)^{\dagger})}
 =\mc{N}(\rho)$, enabling its preparation whenever the post-selection succeeds.

For the identity channel $\mc N=\mathcal I$, one has
$\Lambda_{\mathcal I}^{\mathrm T_1}=S$, the swap operator on $n+n$ qubits.
Taking $m=0$ and $U_{\mc N}=S$ reduces Eq.~\eqref{eq:circuit} to the standard block encoding of~$\rho$ \cite{Low2019hamiltonian,gilyn2019singular}.
BELT replaces $S=\Lambda_{\mathcal I}^{\mathrm T_1}$ with $\Lambda_{\mathcal N}^{\mathrm T_1}$, thereby applying $\mc N$ to the block-encoded $\rho$.

The block-encoded matrix $\alpha^{-1}\mc N(\rho)$ can be processed with QSVT \cite{gilyn2019singular,martyn2021grand}: the circuit in Fig.~\ref{fig:topfigure}(b) (approximately) implements a real function $f:\mathbb R\to\mathbb R$ on its singular values, yielding a block-encoded matrix $f(\alpha^{-1}\mc{N}(\rho))$.

\section{Applications}

\begin{figure*}[t]
\centering
\includegraphics[width=0.94\textwidth]{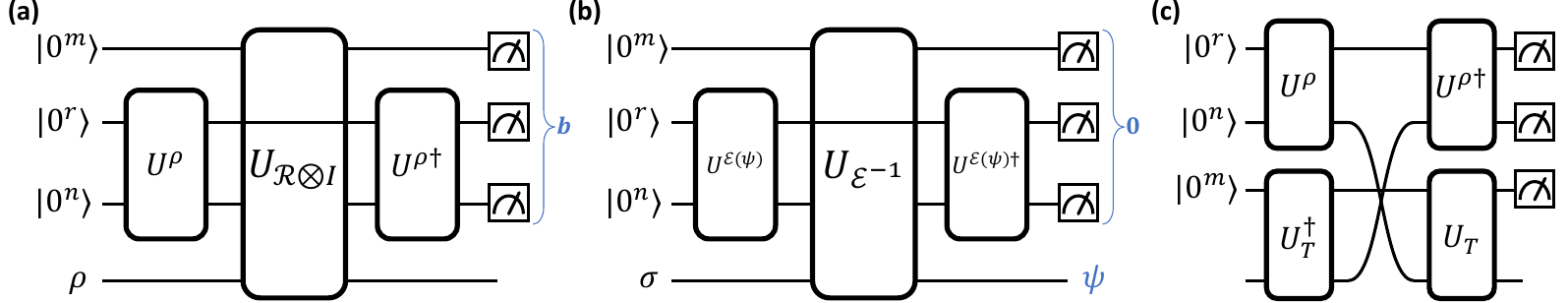}
\caption{(a) The protocol for entanglement detection using the reduction criterion. Measure the first $m+r+n$ qubits and get $\mathbf{b}$. Repeat this circuit for $K$ times, and get $\{\mathbf{b}_1,\cdots,\mathbf{b}_K\}$. If $\mathbf{0}=0^{m+r+n}\in\{\mathbf{b}_1,\cdots,\mathbf{b}_K\}$, we classify $\rho$ as entangled, otherwise we classify $\rho$ as separable. (b) The protocol for inverting a known quantum channel on an unknown state. If the measure result on the first $m+r+n$ qubits is $\mathbf{0}$, then the quantum channel $\mc{E}$ is inverted, and the last $n$ qubit outputs $\psi$. 
The success probability can be enhanced by robust oblivious amplitude amplification.
(c) The circuit for block encoding $T\rho T^\dagger$, where $U_T$ is a block encoding for $T$.}
\label{fig:applications}
\end{figure*}

\subsection{Application in entanglement detection}

Entanglement, the hallmark of non-classical correlations in composite quantum systems, underpins quantum computational supremacy \cite{Harrow2017supremacy,Chitambar2019resource} and a broad range of quantum-information tasks \cite{ekert1991quantum,Bennett1992Communication,Bennett1993Teleporting,Raussendorf2001OneWay}.  
It has been studied systematically for more than three decades \cite{Pasquale2004qft,Amico2008Entanglement,Horodecki2009entanglement,gunhe2009entanglement,pan2012multiphoton}.  
Entanglement criteria \cite{gunhe2009entanglement} not only provide rigorous proof of entanglement but also guide the design of entanglement detection protocols \cite{liu2024separation} to benchmark and validate entanglement experimentally.

Among the many entanglement criteria for bipartite systems, positive map criteria stand out for their conceptual simplicity and strong detection power \cite{gunhe2009entanglement}.  
Given a positive but non-CP map $\mathcal P$, the criterion asserts that a bipartite state $\rho$ on $A\otimes B$ is entangled if $\bigl(\mathcal P_A\otimes\mathcal I_B\bigr)(\rho)$ has a negative eigenvalue.
Prominent examples include the positive partial-transpose criterion \cite{peres1996ppt} (with the positive map as the transpose map) and the reduction criterion \cite{Horodecki1999reduction} (with reduction map $\mathcal R(A)=\Tr(A)\mathbb I-A$). 
Verifying these criteria usually requires highly joint operations or exponential repetition times \cite{gray2018machine,yu2021optimal,zhou2020Single,elben2020mixed}, especially for single-copy protocols \cite{liu2024separation}.

BELT offers a fully coherent route.  
For a bipartite state $\rho$ on $A\otimes B$ and a positive map $\mathcal P$ acting on $A$, we treat the partial positive map $\mathcal N=\mathcal P_A\otimes\mathcal I_B$ as the linear transformation to be simulated.  
BELT yields a block encoding of $\mathcal N(\rho)$, whose smallest (possibly negative) eigenvalue immediately reveals whether $\rho$ is entangled.

As a concrete illustration, we now demonstrate an exponential-versus-constant separation in the sample complexity of a pure state entanglement detection task.

\begin{definition}\label{def:ED_task}
A bipartite state $\rho$ on $n=2q$ qubits is said to be sampled according to the distribution of $\pi$, if with probability $0.5$, it is an $n$-qubit Haar random pure state; with probability $0.5$, it is the tensor product of two independent $q$-qubit Haar random pure states, $\rho=\rho_A\otimes\rho_B$, with $\abs{A}=\abs{B}=q$.
\end{definition}


\begin{fact}[\cite{liu2024separation}]
Given multiple copies of an $n$-qubit state $\rho\sim\pi$, any strategy restricted to single-copy operations requires $\Omega \bigl(2^{n/4}\bigr)$ copies to decide, with success probability at least $2/3$, whether $\rho$ is entangled.
\end{fact}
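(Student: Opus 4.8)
The plan is to recast the claim as a two‑hypothesis distinguishing lower bound and attack it with the learning‑tree method for single‑copy protocols. Under $\pi$, an $n$-qubit Haar‑random pure state has full Schmidt rank across the cut $A|B$ almost surely and is therefore entangled, whereas $\rho_A\otimes\rho_B$ is a product state and hence separable; so deciding ``is $\rho$ entangled'' is, up to a null event, the same as deciding which branch of $\pi$ produced $\rho$. It thus suffices to show that no single‑copy strategy using $T=o(2^{n/4})$ copies can distinguish, with advantage bounded away from $0$, the ensemble $H_1$ of global $n$-qubit Haar‑random pure states from the ensemble $H_0$ of products $\rho_A\otimes\rho_B$ of two independent $q$-qubit Haar‑random pure states. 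I would represent an arbitrary adaptive single‑copy protocol as a rooted tree of depth $T$: each internal node applies some POVM to one fresh copy (possibly with internal ancillas), records the classical outcome, and branches. A transcript is a leaf $\ell$, reached under hypothesis $H\in\{H_0,H_1\}$ with probability $p^{H}(\ell)=\mathbb E_{\rho\sim H}\Bigl[\prod_{t=1}^{T}\Tr\bigl(M^{(\ell)}_t\rho\bigr)\Bigr]$, where $M^{(\ell)}_1,\dots,M^{(\ell)}_T$ are the POVM elements along the path to $\ell$. By Le Cam's lemma, success probability $\ge 2/3$ forces $\|p^{H_1}-p^{H_0}\|_{\mathrm{TV}}\ge 1/3$, so the goal reduces to proving $\|p^{H_1}-p^{H_0}\|_{\mathrm{TV}}=O\bigl(T^2\,2^{-q}\bigr)$.

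Next I would factor out the ``null'' behaviour. Writing $\Tr\bigl(M^{(\ell)}_t\rho\bigr)=2^{-n}\Tr\bigl(M^{(\ell)}_t\bigr)\bigl(1+x_t(\rho)\bigr)$ gives $p^{H}(\ell)=q(\ell)\,\mathbb E_{\rho\sim H}\Bigl[\prod_t(1+x_t)\Bigr]$, where $q(\ell)=2^{-nT}\prod_t\Tr\bigl(M^{(\ell)}_t\bigr)$ is the transcript probability for $\rho=\mbb{I}_n/2^n$ and is common to both hypotheses. Crucially $\mathbb E_{H_1}[\rho]=\mathbb E_{H_0}[\rho]=\mbb{I}_n/2^n$ exactly (by unitary invariance, and by independence of the two factors in $H_0$), so $\mathbb E_{H_1}[x_t]=\mathbb E_{H_0}[x_t]=0$; expanding $\prod_t(1+x_t)$, the zeroth‑ and first‑order terms drop out of $p^{H_1}(\ell)-p^{H_0}(\ell)$, leaving $q(\ell)\sum_{t<t'}\bigl(\mathbb E_{H_1}[x_tx_{t'}]-\mathbb E_{H_0}[x_tx_{t'}]\bigr)$ plus order‑$\ge 3$ corrections. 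The second moments follow from $\mathbb E_{H_1}\bigl[\rho^{\otimes 2}\bigr]=\frac{\mbb{I}+S}{2^n(2^n+1)}$ and $\mathbb E_{H_0}\bigl[\rho^{\otimes 2}\bigr]=\frac{\mbb{I}_A+S_A}{2^q(2^q+1)}\otimes\frac{\mbb{I}_B+S_B}{2^q(2^q+1)}$, with $S$ the full swap and $S_A,S_B$ the subsystem swaps ($S_A\otimes S_B=S$ after reordering tensor factors). Contracting against $M^{(\ell)}_t\otimes M^{(\ell)}_{t'}$ and dividing by the baseline $\Tr\bigl(M^{(\ell)}_t\bigr)\Tr\bigl(M^{(\ell)}_{t'}\bigr)\,2^{-2n}$, the two expressions agree on the $\mbb{I}\otimes\mbb{I}$ and $S_A\otimes S_B$ pieces, so their difference is controlled by the ``single‑partition swap'' terms $S_A\otimes\mbb{I}_B$ and $\mbb{I}_A\otimes S_B$, plus $O(2^{-q})$ corrections from the $(2^q+1)^{-1}$ denominators.

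It remains to bound the accumulated discrepancy. The single‑partition‑swap terms translate, after normalization, into quantities such as $\Tr_{A}\bigl[(\Tr_{B} M^{(\ell)}_t)(\Tr_{B} M^{(\ell)}_{t'})\bigr]\,/\,\bigl(\Tr\bigl(M^{(\ell)}_t\bigr)\Tr\bigl(M^{(\ell)}_{t'}\bigr)\bigr)$ and its $A\leftrightarrow B$ mirror; these can be $\Theta(1)$ for a single outcome but become $O(2^{-q})$ once one weights by the transcript probability $q(\ell)$ and sums over the measurement outcomes at the two relevant steps (one may take the POVMs rank‑one without loss of generality, so that each $x_t$ has an explicit Beta‑type law, and the completeness relations $\sum_o M_{t,o}=\mbb{I}_n$ implement the averaging). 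Summing over the $\binom{T}{2}$ pairs gives an order‑two contribution to $\|p^{H_1}-p^{H_0}\|_{\mathrm{TV}}$ of size $O\bigl(T^2\,2^{-q}\bigr)$. The order‑$\ge 3$ terms I would control either by the same averaging mechanism — each additional paired deviation factor carrying a further $2^{-q}$, so the series is geometric in $T^2 2^{-q}$ and dominated by its order‑two term for $T=o(2^{q/2})$ — or, more robustly, by replacing the raw Taylor expansion with a one‑sided likelihood‑ratio / martingale bound in the style of recent single‑copy learning lower bounds. Either way, $\|p^{H_1}-p^{H_0}\|_{\mathrm{TV}}\ge 1/3$ then forces $T=\Omega(2^{q/2})=\Omega(2^{n/4})$.

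The step I expect to be the main obstacle is the uniform bound on the single‑partition‑swap contribution over all adaptive protocols: because the $M^{(\ell)}_t$ are arbitrary operators on the whole $n$-qubit system with no product structure across $A|B$, the $2^{-q}$ suppression is invisible at the level of a single pair and only surfaces after one carefully combines the transcript weights with the POVM completeness relations — that is, after establishing that $\sum_\ell q(\ell)\,\Tr_{A}\bigl[(\Tr_{B} M^{(\ell)}_t)(\Tr_{B} M^{(\ell)}_{t'})\bigr]\,/\,\bigl(\Tr\bigl(M^{(\ell)}_t\bigr)\Tr\bigl(M^{(\ell)}_{t'}\bigr)\bigr)=O(2^{-q})$. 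This averaging estimate is the technical heart of the argument, and it is precisely what produces the constant‑versus‑exponential separation; the details are carried out in \cite{liu2024separation}.
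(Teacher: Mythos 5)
This statement is imported verbatim as a \emph{Fact} from Ref.~\cite{liu2024separation}; the paper you are working from contains no proof of it, so there is nothing internal to compare your argument against --- the honest benchmark is the cited reference itself. Your sketch follows the standard route for such bounds (reduce to two-hypothesis testing between global Haar and product-of-Haar ensembles, represent an adaptive single-copy protocol as a transcript tree, compare moments of $\rho^{\otimes 2}$, invoke Le Cam), and the bookkeeping you do is sound: the first moments coincide exactly, the second-moment difference is carried entirely by the cross terms $S_A\otimes\mbb{I}_B+\mbb{I}_A\otimes S_B$ together with the $O(2^{-q})$ mismatch of the normalizations $2^n(2^n+1)$ versus $2^n(2^q+1)^2$, and $T^2 2^{-q}=\Omega(1)$ does yield $T=\Omega(2^{n/4})$.

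That said, as a proof the proposal has two genuine gaps, both of which you partly acknowledge. First, the central averaging estimate $\sum_\ell q(\ell)\,\Tr_A\bigl[(\Tr_B M^{(\ell)}_t)(\Tr_B M^{(\ell)}_{t'})\bigr]/\bigl(\Tr(M^{(\ell)}_t)\Tr(M^{(\ell)}_{t'})\bigr)=O(2^{-q})$ is asserted, not proven; for rank-one elements this is $\sum_\ell q(\ell)\Tr[\sigma_A^{(t)}\sigma_A^{(t')}]$ with $\sigma_A$ the reduced states of the measurement vectors, and since an adversary can correlate the step-$t'$ measurement with the step-$t$ outcome, showing that the completeness relations force the $2^{-q}$ suppression uniformly over adaptive strategies is exactly the hard part. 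Second, the dismissal of the order-$\ge 3$ terms is not justified as stated: under the Haar ensembles the individual deviations $x_t$ have $\Theta(1)$ fluctuations (for a rank-one $M_t$, $\Tr(M_t\psi)$ is approximately exponentially distributed about its mean), so the claim that ``each additional paired deviation factor carries a further $2^{-q}$'' does not follow from a naive term-by-term Taylor bound; one genuinely needs the one-sided likelihood-ratio/martingale machinery you mention as a fallback. In short: the architecture and the exponent are right and consistent with how Ref.~\cite{liu2024separation} proceeds, but the two estimates that actually produce the separation are deferred rather than established.
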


Notice that the reduction criterion successfully detects entanglement for states sampled from $\pi$.  
If $\ket{\psi}$ is a Haar-random pure state on $2q$ qubits, then
$\mathcal R_A\otimes\mathcal I_B(\psi)=\mathbb I_A\otimes\Tr_A(\psi)-\psi\approx-\psi$,
where “$\approx$’’ holds with probability tending to~$1$ as $n$ grows.  
For a product state $\ket{\psi_A}\otimes\ket{\psi_B}$, we obtain
$\mathcal R_A\otimes\mathcal I_B(\psi_A\otimes\psi_B)=(\mathbb I_A-\psi_A)\otimes\psi_B\ge0$.

Consider the protocol in Fig.~\ref{fig:applications}(a).  
Assume $\rho\sim\pi$ and we have access to multiple copies of $U^\rho$ and its inverse.
The partially transposed Choi matrix of $\mathcal R\otimes\mathcal I$ is
$\Lambda_{\mathcal R\otimes\mathcal I}^{\mathrm T_1}
   =\mathbb I_A\otimes S_B-S_A\otimes S_B$,
where $S_A$ ($S_B$) swaps two copies of $A$ ($B$). We have
$\|\Lambda_{\mathcal R\otimes\mathcal I}^{\mathrm T_1}\|_\infty=2$,
independent of the system size.  
Hence one can choose $U_{\mathcal R\otimes\mathcal I}$ as a $(2,m,0)$ block encoding of $\Lambda_{\mathcal R\otimes\mathcal I}^{\mathrm T_1}$.
We feed $\rho$ into the last system.  
By Theorem~\ref{thm:main}, measuring the first system in the computational basis yields outcome $\ket{0^{m+r+n}}$ with probability $\frac{1}{4}\Tr[\mc{R}\otimes\mc{I}(\rho)\rho\mc{R}\otimes\mc{I}(\rho)]$.
For $\rho=\ketbra{\psi}{\psi}$ Haar-random on $2q$ qubits, $\frac{1}{4}\Tr[\mc{R}\otimes\mc{I}(\rho)\rho\mc{R}\otimes\mc{I}(\rho)]\approx\frac{1}{4}\Tr[(-\psi)\psi(-\psi)]=\frac{1}{4}$,
whereas for $\rho=\psi_A\otimes\psi_B$ with Haar-random $\psi_A$ and $\psi_B$,
$\frac{1}{4}\Tr[\mc{R}\otimes\mc{I}(\rho)\rho\mc{R}\otimes\mc{I}(\rho)]$ equals $0$ because $\Tr[(\mathbb I_A-\psi_A)\psi_A]=0$.  
Repeating the circuit in Fig.~\ref{fig:applications}(a) and declaring “entangled’’ upon observing $\ket{0^{m+r+n}}$ and “separable’’ otherwise gives the following result, proved in Appendix~\ref{app:proof_of_ED}.

\begin{theorem}\label{thm:ED}
The protocol in Fig.~\ref{fig:applications}(a) detects entanglement for $\rho\sim\pi$ using 6 oracle calls to $U^\rho$ or ${U^\rho}^\dagger$, with success probability $ \ge2/3$ for all sufficiently large $n$.
\end{theorem}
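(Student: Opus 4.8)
The plan is to feed BELT with the reduction map, read off its post‑selection probability via Theorem~\ref{thm:main}, and turn that probability into a two‑run classifier whose error I control with an elementary Haar estimate.

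First I would instantiate Theorem~\ref{thm:main} with $\mc N=\mc R_A\otimes\id_B$ and $k=n$. As recorded just above the statement, $\Lambda_{\mc R\otimes\id}^{\mathrm T_1}=(\mbb I_A-S_A)\otimes S_B$ has operator norm $2$, so a $(2,m,0)$‑block encoding $U_{\mc R\otimes\id}$ exists — for instance as a two‑term linear combination of the unitaries $\mbb I_A\otimes S_B$ and $S_A\otimes S_B$ with coefficients $\pm1$. Feeding a fresh copy of $\rho$ into the last register and measuring the first $m+r+n$ qubits, Theorem~\ref{thm:main} says the outcome $\mb{0}$ appears with probability $p(\rho):=\tfrac14\Tr[\mc N(\rho)\rho\,\mc N(\rho)^\dagger]=\tfrac14\Tr[\mc N(\rho)\rho\,\mc N(\rho)]$, the last equality because $\mc R$ preserves Hermiticity and $\rho$ is a state. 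Each run of the circuit costs three oracle calls — one $U^\rho$ for the purification inside $V$, one ${U^\rho}^\dagger$, and one more $U^\rho$ to prepare the input copy $\sigma=\rho$ (the $r$ purifying qubits are discarded) — so taking $K=2$ runs gives exactly the advertised $6$ calls.

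Next I would evaluate $p(\rho)$ on the two branches of the prior $\pi$ of Definition~\ref{def:ED_task}. If $\rho=\psi_A\otimes\psi_B$ is a product of pure states, then $\mc N(\rho)=(\mbb I_A-\psi_A)\otimes\psi_B$ and $(\mbb I_A-\psi_A)\psi_A(\mbb I_A-\psi_A)=0$, so $p(\rho)=0$ \emph{exactly}; the classifier then never sees $\mb{0}$ and outputs ``separable'' with certainty. If $\rho=\ketbra{\psi}{\psi}$ with $\ket{\psi}$ Haar‑random on $2q$ qubits, put $E:=\mbb I_A\otimes\Tr_A(\psi)\succeq0$, so $\mc N(\psi)=-\psi+E$; expanding the triple product and using $\bra{\psi}E\ket{\psi}=\Tr[(\Tr_A\psi)^2]$ and $\bra{\psi}E^2\ket{\psi}=\Tr[(\Tr_A\psi)^3]\ge0$ gives $p(\psi)=\tfrac14\bigl(1-2\Tr[(\Tr_A\psi)^2]+\Tr[(\Tr_A\psi)^3]\bigr)\ge\tfrac14\bigl(1-2\Tr[(\Tr_A\psi)^2]\bigr)$. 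The reduced‑state purity concentrates: from the standard Haar average $\mbb E\,\Tr[(\Tr_A\psi)^2]=\tfrac{2\cdot2^q}{2^{2q}+1}\le2^{1-q}$, Markov's inequality gives $\Tr[(\Tr_A\psi)^2]\le2^{-q/2}$ with probability $\ge1-2^{1-q/2}$, and hence $p(\psi)\ge p_0:=\tfrac14(1-2^{1-q/2})$ on that event.

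Finally I would bound the classification error. Conditioned on $\rho$ separable the answer is always right; conditioned on $\rho$ entangled, $\mb{0}$ appears in at least one of the two runs with probability $1-(1-p(\psi))^2\ge1-(1-p_0)^2$ on the good event, so the conditional success probability is at least $(1-2^{1-q/2})\bigl(1-(1-p_0)^2\bigr)$. Averaging over $\pi$, $\Pr[\text{correct}]\ge\tfrac12+\tfrac12(1-2^{1-q/2})\bigl(1-(1-p_0)^2\bigr)$, and since $p_0\to\tfrac14$ the right‑hand side increases to $\tfrac12+\tfrac12\cdot\tfrac{7}{16}=\tfrac{23}{32}>\tfrac23$ as $q\to\infty$; being monotone in $q$, it exceeds $2/3$ for all sufficiently large $n=2q$. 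The only external input is the Haar purity formula $\mbb E\,\Tr[(\Tr_A\psi)^2]=(d_A+d_B)/(d_Ad_B+1)$ (equivalently, concentration of the reduced purity). I expect the main difficulty to be bookkeeping rather than analysis — fixing $K=2$ and three oracle calls per run, and confirming that the limiting success probability $23/32$ still clears the $2/3$ threshold once the $O(2^{-q/2})$ concentration losses are folded in.
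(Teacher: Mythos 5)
Your proposal is correct and follows essentially the same route as the paper's proof: instantiate BELT with the reduction map (success probability exactly $0$ on product states), use the Lubkin Haar average of the reduced purity plus Markov's inequality to show the per-run success probability on entangled inputs approaches $1/4$, and conclude with $K=2$ runs (three oracle calls each). Your one variation — expanding $\Tr[\mc N(\psi)\psi\,\mc N(\psi)]=1-2\Tr[(\Tr_A\psi)^2]+\Tr[(\Tr_A\psi)^3]$ exactly and dropping the nonnegative cubic term — is a slightly cleaner estimate than the paper's operator-norm/Jensen bound on $\norm{\Tr_A(\psi)}_\infty$, but the argument and conclusion are the same.
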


The exponential-versus-constant separation in sample complexity between single-copy protocols and the protocol of Fig.~\ref{fig:applications}(a) stems from the enlarged system size ($m+r+n$ ancilla qubits) and the joint operations performed on it.

The gate complexity of the protocol shown in Fig.~\ref{fig:applications}(a) remains efficient: the matrix $\Lambda_{\mathcal R\otimes\mathcal I}^{\mathrm T_1}$ is $2$-sparse and thus admits an efficient block encoding \cite{gilyn2019singular} (see Appendix~\ref{app:sparse_BE}).

\begin{table*}[htbp]
\centering
\resizebox{0.9\textwidth}{!}{
\begin{tabular}{c|c|c|c}
\hline
\hline
 & Map type & Entanglement detection (Def.~\ref{def:ED_task}) & 
Recover $\psi$ from $\mc{E}(\psi)$
 \\
\hline
BELT (Theorem~\ref{thm:main}) & linear & 6 & $\mc{O}\big(\log(\delta^{-1})\bignorm{\Lambda_{\mc{E}^{-1}}^{\mathrm{T}_1}}_{\infty}\bra{\psi}\sigma\ket{\psi}^{-1}\big)$ \\
\hline
HME \cite{wei2023realizing} & Hermitian-preserving & $\mc{O}(1)$ & $\mc{O}\big(\log(\delta^{-1})\epsilon^{-1}\bignorm{\Lambda_{\mc{E}^{-1}}^{\mathrm{T}_1}}_{\infty}^{2}\bra{\psi}\sigma\ket{\psi}^{-2}\big)$ \\
\hline
Single-copy protocols & CPTP & $\Omega(2^{n/4})$ & cannot perform $\mc{E}^{-1}$ \\
\hline
\hline
\end{tabular}
}
\caption{Comparison of the power of BELT, HME, and single-copy protocols.
BELT can simulate any linear map, thus extending the power of HME, which simulates Hermitian-preserving maps.
In the tasks of entanglement detection and inverting quantum channels, compared with HME, BELT offers three key advantages:  
(i) it further decreases the sample complexity for both tasks;  
(ii) it substantially shortens the circuit depth required for entanglement detection; and  
(iii) it removes the $\epsilon$-dependence when reconstructing $\psi$ from $\mathcal{E}(\psi)$.
The sample complexity improvements of BELT compared to HME arise from a stronger oracle model--namely, access to the unitary that prepares a purification of the input state, rather than access to the state itself.
For any single-copy protocol, the sample complexity for the entanglement detection task is exponential in $n$, and the inverse of a CPTP map $\mc{E}$ cannot be performed in general.
}
\label{tab:HME_BE_comparison}
\end{table*}

\subsection{Application in inverting quantum channel}

The task of inverting a quantum channel (i.e., a CPTP map) has received growing attention in recent years \cite{Quintino2019Reversing,Yoshida2023Reversing,Zhu2024Reversing,mo2025efficientinversionunknownunitary,Mo2025Parameterized}.  
Both quantum error correction \cite{shor1995scheme,gottesman1997stabilizer} and quantum error mitigation \cite{cai2022quantum} can be viewed as strategies for (approximately) inverting noise channels, with additional time and space overhead.
If an invertible CPTP map $\mathcal E$ is not unitary, its inverse $\mathcal E^{-1}$ is Hermitian-preserving and trace-preserving (HPTP) but necessarily non-CP. 

Consider the task that, for $\mathcal E$ a known invertible channel and $\psi$ an unknown pure state, try to recover from $\mathcal E(\psi)$.
A common way is to simulate $\mc{E}^{-1}$ ``virtually" \cite{temme2017mitigation,endo2018practical,Jiang2021physical,Regula2021operational,Zhu2024Reversing}: one decomposes the HPTP map $\mc{E}^{-1}$ into an affine combination of CPTP maps and then applies quasi-probability sampling to choose which CPTP map to implement.  
Such virtual methods yield expectation values of the original state $\psi=\mc{E}^{-1}\circ\mc{E}(\psi)$ but cannot directly prepare $\psi$.

Using BELT, we can prepare the original state $\psi$ from $\mc{E}(\psi)$.
Assume we have an oracle $U^{\mathcal E(\psi)}$ that prepares a purification of $\mathcal E(\psi)$, together with its inverse $(U^{\mathcal E(\psi)})^{\dagger}$. Notice that $(U^{\mathcal E(\psi)})^{\dagger}$ can also be obtained from $U^{\mathcal E(\psi)}$ with existing techniques for unitary inversion \cite{Quintino2019Reversing,Yoshida2023Reversing,mo2025efficientinversionunknownunitary,Mo2025Parameterized}.
In the protocol shown in Fig.~\ref{fig:applications}(b), the unitary $U_{\mathcal E^{-1}}$ is chosen as a $(\bignorm{\Lambda_{\mathcal E^{-1}}^{\mathrm{T}_1}}_{\infty},m,0)$-block encoding of $\Lambda_{\mathcal E^{-1}}^{\mathrm{T}_1}$, and $\sigma$ is a reference state that is a guess for $\psi$.  
The overall circuit therefore implements a $(\bignorm{\Lambda_{\mathcal E^{-1}}^{\mathrm{T}_1}}_{\infty},m+r+n,0)$-block encoding of $\mathcal E^{-1}(\mathcal E(\psi))=\psi$.  
Upon measuring the first $m+r+n$ qubits and obtaining $\mathbf 0=0^{m+r+n}$, the remaining system collapses to $\ketbra{\psi}{\psi}$.  
The success probability of this post-selection equals $\bignorm{\Lambda_{\mathcal E^{-1}}^{\mathrm{T}_1}}_{\infty}^{-2}\bra{\psi}\sigma\ket{\psi}$.  
The circuit shown in  Fig.~\ref{fig:applications}(b) is repeated until the measurement result $\mathbf{0}$ occurs.
Therefore, to achieve success probability $\ge1-\delta$, we need to repeat this circuit for $\mc{O}\big(\log(\delta^{-1})\bignorm{\Lambda_{\mc{E}^{-1}}^{\mathrm{T}_1}}_{\infty}^{2}\bra{\psi}\sigma\ket{\psi}^{-1}\big)$ times, and the number of oracle calls to $U^{\mc{E}(\psi)}$ and its inverse is of the same scaling.

Robust oblivious amplitude amplification~\cite{Berry2014Exponential} can be implemented via quantum singular value transformation (Theorem~15 in Ref.~\cite{gilyn2019singular}) to realize the degree-$N$ Chebyshev polynomial $T_N$, where $N=\mc{O}\bigl(\bignorm{\Lambda_{\mc{E}^{-1}}^{\mathrm{T}_1}}_{\infty}\bigr)$ is an integer. This amplifies the block-encoded $\bignorm{\Lambda_{\mc{E}^{-1}}^{\mathrm{T}_1}}_{\infty}^{-1}\ketbra{\psi}{\psi}$ to $\Omega(1)\ketbra{\psi}{\psi}$.
The procedure requires $N$ applications of $(\mathbb I_m\otimes(U^{\mathcal E(\psi)})^{\dagger}\otimes\mathbb I_n)(U_{\mathcal E^{-1}}\otimes\mathbb I_r)(\mathbb I_m\otimes U^{\mathcal E(\psi)}\otimes\mathbb I_n)$
and its inverse.
The probability of obtaining the all-zero outcome on the $(1+m+r+n)$-qubit ancilla register (the extra qubit comes from the QSVT circuit) is then $\Omega(1)\bra{\psi}\sigma\ket{\psi}$.
Therefore, achieving success probability at least $1-\delta$ in recovering $\psi$ uses $\mathcal O\bigl(\log(\delta^{-1})\bra{\psi}\sigma\ket{\psi}^{-1}\bignorm{\Lambda_{\mathcal E^{-1}}^{\mathrm{T}_1}}_{\infty}\bigr)$ oracle calls to $U^{\mc{E}(\psi)}$ and its inverse.
This protocol couples BELT with QSVT, realizing the transform $T_N \circ\bignorm{\Lambda_{\mc{E}^{-1}}^{\mathrm{T}_1}}_{\infty}^{-1}\mc{E}^{-1}$
on $\mc{E}(\psi)$.

\begin{theorem}\label{thm:inverting}
The protocol shown in Fig.~\ref{fig:applications}(b) can recover $\psi$, with success probability $\ge1-\delta$, using 
\begin{equation}
\order{\log(\delta^{-1})\bignorm{\Lambda_{\mc{E}^{-1}}^{\mathrm{T}_1}}_{\infty}^{2}\bra{\psi}\sigma\ket{\psi}^{-1}}
\end{equation}
calls of $U^{\mc{E}(\psi)}$ and its inverse.
If the robust oblivious amplitude amplification is applied, then the number of calls can be reduced to
\begin{equation}
\order{\log(\delta^{-1})\bignorm{\Lambda_{\mc{E}^{-1}}^{\mathrm{T}_1}}_{\infty}\bra{\psi}\sigma\ket{\psi}^{-1}}.
\end{equation}
\end{theorem}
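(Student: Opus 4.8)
The plan is to instantiate Theorem~\ref{thm:main} with $\mc{N}=\mc{E}^{-1}$ and then read off the post-selection success probability, treating the two displayed bounds as the cost of bare repetition and the cost after amplitude amplification. Since $\mc{E}$ is an invertible channel on a finite-dimensional space, $\mc{E}^{-1}$ is a well-defined (HPTP, generally non-CP) linear map, so the partially transposed Choi matrix $\Lambda_{\mc{E}^{-1}}^{\mathrm{T}_1}$ is a finite matrix of finite operator norm and admits a $(\bignorm{\Lambda_{\mc{E}^{-1}}^{\mathrm{T}_1}}_{\infty},m,0)$-block encoding $U_{\mc{E}^{-1}}$. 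Feeding the purification oracle $U^{\mc{E}(\psi)}$ into Eq.~\eqref{eq:circuit}, Theorem~\ref{thm:main} makes the circuit $V$ of Fig.~\ref{fig:applications}(b) a $(\bignorm{\Lambda_{\mc{E}^{-1}}^{\mathrm{T}_1}}_{\infty},m+r+n,0)$-block encoding of $\mc{E}^{-1}(\mc{E}(\psi))=\ketbra{\psi}{\psi}$. Because this block-encoded matrix is a rank-one projector, the post-selection analysis following Theorem~\ref{thm:main} (with the reference state $\sigma$ fed into the last $n$ qubits) shows that conditioning on the all-zero outcome $\mathbf{0}=0^{m+r+n}$ on the first register leaves the last $n$ qubits in exactly $\ketbra{\psi}{\psi}$, with success probability $\bignorm{\Lambda_{\mc{E}^{-1}}^{\mathrm{T}_1}}_{\infty}^{-2}\bra{\psi}\sigma\ket{\psi}$.

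For the first bound I would apply the standard boosting estimate: after $K$ independent runs the all-zero outcome appears at least once with probability $1-(1-p)^{K}\ge 1-e^{-pK}$, where $p=\bignorm{\Lambda_{\mc{E}^{-1}}^{\mathrm{T}_1}}_{\infty}^{-2}\bra{\psi}\sigma\ket{\psi}$; requiring this to be $\ge 1-\delta$ gives $K=\mc{O}(p^{-1}\log(\delta^{-1}))$. Each run calls $U^{\mc{E}(\psi)}$ once and its inverse once, while the fixed, known unitary $U_{\mc{E}^{-1}}$ is not an oracle call, so the total is $\mc{O}(\log(\delta^{-1})\bignorm{\Lambda_{\mc{E}^{-1}}^{\mathrm{T}_1}}_{\infty}^{2}\bra{\psi}\sigma\ket{\psi}^{-1})$, which is the first displayed bound.

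For the second bound I would insert $V$ into the QSVT circuit of Fig.~\ref{fig:topfigure}(b) to apply the degree-$N$ Chebyshev polynomial $T_N$ to the singular values of the block-encoded matrix $\bignorm{\Lambda_{\mc{E}^{-1}}^{\mathrm{T}_1}}_{\infty}^{-1}\ketbra{\psi}{\psi}$ --- this is the robust oblivious amplitude amplification of \cite{Berry2014Exponential}, realizable via QSVT by Theorem~15 of \cite{gilyn2019singular}. This matrix has a single nonzero singular value $\bignorm{\Lambda_{\mc{E}^{-1}}^{\mathrm{T}_1}}_{\infty}^{-1}$ with coinciding left and right singular vector $\ket{\psi}$, and choosing a suitably chosen odd integer $N=\Theta(\bignorm{\Lambda_{\mc{E}^{-1}}^{\mathrm{T}_1}}_{\infty})$ makes $T_N(\bignorm{\Lambda_{\mc{E}^{-1}}^{\mathrm{T}_1}}_{\infty}^{-1})=\Omega(1)$, so the output is a block encoding of $c\ketbra{\psi}{\psi}$ with $c=\Omega(1)$. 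Post-selecting on the all-zero outcome of the enlarged $(1+m+r+n)$-qubit ancilla again leaves $\ketbra{\psi}{\psi}$ on the last $n$ qubits, now with success probability $c^{2}\bra{\psi}\sigma\ket{\psi}=\Omega(1)\bra{\psi}\sigma\ket{\psi}$. Since each QSVT iteration uses $\mc{O}(N)$ applications of $V$ and $V^{\dagger}$, hence $\mc{O}(N)$ oracle calls, and only $\mc{O}(\bra{\psi}\sigma\ket{\psi}^{-1}\log(\delta^{-1}))$ iterations are now required, the total is $\mc{O}(\log(\delta^{-1})\bignorm{\Lambda_{\mc{E}^{-1}}^{\mathrm{T}_1}}_{\infty}\bra{\psi}\sigma\ket{\psi}^{-1})$, the second displayed bound.

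The main obstacle I anticipate is twofold. First, one must pin down the Chebyshev amplification estimate rigorously: that a real, fixed-parity polynomial of degree only $\Theta(\bignorm{\Lambda_{\mc{E}^{-1}}^{\mathrm{T}_1}}_{\infty})$ and bounded by $1$ on $[-1,1]$ raises the value $\bignorm{\Lambda_{\mc{E}^{-1}}^{\mathrm{T}_1}}_{\infty}^{-1}$ to a constant, and that it meets the parity and normalization conditions under which QSVT implements it exactly. Second, one must carefully account for the ancilla registers when BELT's $m+r+n$ flag qubits are composed with the QSVT circuit's extra qubit, to confirm that the conditional output on the last $n$ qubits is still exactly $\ketbra{\psi}{\psi}$ and that the success probability factorizes as the amplified coefficient squared times $\bra{\psi}\sigma\ket{\psi}$. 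The remaining steps are routine bookkeeping on top of Theorem~\ref{thm:main}.
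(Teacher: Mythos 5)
Your proposal is correct and takes essentially the same route as the paper: instantiate Theorem~\ref{thm:main} with $\mc{N}=\mc{E}^{-1}$ to obtain a $(\bignorm{\Lambda_{\mc{E}^{-1}}^{\mathrm{T}_1}}_{\infty},m+r+n,0)$-block encoding of $\ketbra{\psi}{\psi}$, read off the post-selection probability $\bignorm{\Lambda_{\mc{E}^{-1}}^{\mathrm{T}_1}}_{\infty}^{-2}\bra{\psi}\sigma\ket{\psi}$ and boost by independent repetition for the first bound, then apply robust oblivious amplitude amplification via a degree-$N=\Theta\bigl(\bignorm{\Lambda_{\mc{E}^{-1}}^{\mathrm{T}_1}}_{\infty}\bigr)$ Chebyshev polynomial realized by QSVT for the second. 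The details you flag as potential obstacles (the Chebyshev estimate and the ancilla bookkeeping) are exactly the points the paper also treats only at this level of rigor, so nothing is missing relative to the paper's own argument.
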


Compared with noiseless state recovery via HME \cite{wei2023realizing}, which achieves accuracy $\epsilon$ and success probability at least $1-\delta$ with
$\mathcal O\bigl(\log(\delta^{-1})\epsilon^{-1}
\bignorm{\Lambda_{\mathcal E^{-1}}^{\mathrm{T}_1}}_{\infty}^{2}
\bra{\psi}\sigma\ket{\psi}^{-2}\bigr)$
copies of $\mathcal E(\psi)$, the protocol in Fig.~\ref{fig:applications}(b) recovers $\psi$ exactly (no $\epsilon$ dependence), improves the dependence on $\bra{\psi}\sigma\ket{\psi}^{-1}$, and uses a shallower circuit.  
Applying robust oblivious amplitude amplification increases circuit depth but further improves the dependence on $\bignorm{\Lambda_{\mathcal E^{-1}}^{\mathrm{T}_1}}_{\infty}$.  
The improvement stems from having oracle access to a unitary that prepares a purification of $\mc{E}(\psi)$, rather than relying on multiple copies of the mixed state $\mc{E}(\psi)$ itself, as in HME-based noiseless state recovery.

\subsection{Application in pseudo-differential operators}
Although BELT was introduced primarily for simulating non-CP linear maps, it can also handle CP maps. 
By Stinespring’s dilation theorem \cite{Stinespring1955Positive,wood2011tensor}, any CP map $\mc{F}:L(\mc{X})\rightarrow L(\mc{Y})$ can be written as $\mc{F}(\rho)=\Tr_{\mc{Z}}(A\rho A^\dagger)$, where $A:\mc{X}\rightarrow\mc{Y}\otimes\mc{Z}$ is a matrix and $\dim(\mc{Z})\le\dim(\mc{X})\dim(\mc{Y})$.  We have
\begin{align}
\Lambda_{\mc{F}}^{\mathrm{T}_1}=&(\mbb{I}_{\mc{X}}\otimes\mbb{I}_{\mc{Y}}\otimes\bra{\Phi^+})(\mbb{I}_\mc{X}\otimes A\otimes\mbb{I}_\mc{Z})\times\\&(S\otimes\mbb{I}_{\mc{Z}})(\mbb{I}_\mc{X}\otimes A^\dagger\otimes\mbb{I}_\mc{Z})(\mbb{I}_{\mc{X}}\otimes\mbb{I}_{\mc{Y}}\otimes\ket{\Phi^+}),
\end{align}
where $S$ is the SWAP operator acting on $\mc{X}\otimes\mc{X}$ and $\ket{\Phi^+}$ is the unnormalised maximally entangled state on $\mc{Z}\otimes\mc{Z}$.  Note that $\bignorm{\Lambda_{\mc{F}}^{\mathrm{T}_1}}_\infty\le\norm{A}_\infty^{2}$, so BELT enables an efficient block encoding of $\mc{F}(\rho)$ whenever $\norm{A}_\infty$ is bounded.

A special case is $\dim(\mc{Z})=1$, in which case the CP map $\mc{F}(\rho)=A\rho A^\dagger$ represents a possibly non-unitary evolution.
Here a block encoding of $\Lambda_{\mc{F}}^{\mathrm{T}_1}=(\mbb{I}_{\mc{X}}\otimes A)S(\mbb{I}_{\mc{X}}\otimes A^\dagger)$ can be obtained if the block encoding of $A$ is available.

Pseudo-differential operators (PDOs) extend differential operators and arise widely in applied mathematics, including integral-differential equations, differential geometry, and quantum field theory. A PDO takes the form
\begin{equation}
    P(x,D) f(x) = \int_{\mathbb{R}^d} e^{2\pi i x\xi} a(x,\xi) \hat{f}(\xi) d \xi,
\end{equation}
where $a(x,\xi)$ is the symbol of $P(x,D)$ and $\hat{f}$ is the Fourier transform of $f$.  
For the elliptic operator $P(x,D)=I-\nabla\cdot(\omega(x)\nabla)$, the symbol is $a(x,\xi)=1-2\pi i\nabla\omega(x)\cdot\xi+4\pi^{2}|\xi|^{2}$.  
Hence the solution to $P(x,D)u(x)=f(x)$ can be represented as
\begin{equation}
u(x)=P^{-1}(x,D)f(x)=\int_{\mathbb{R}^d}e^{2\pi i x\xi}a^{-1}(x,\xi)\hat{f}(\xi)d\xi.
\end{equation}
The discretized PDO~\cite{li2023efficient} is given by:
\begin{equation}\label{eq:discretized_PDO}
T(x,D) f(x) = \sum_{\xi\in [P-1]_0^d} e^{2\pi i x\xi/P} a(x,\xi) \hat{f}(\xi), \quad x\in[P-1]_0^d,
\end{equation}
where $[P-1]_0 := \{0,1,\ldots,P-1\}$ and $P=2^p$ is the grid number on each coordinate.



Consider the linear map $\rho\mapsto T\rho T^\dagger$, where $T=T(x,D)$ from Eq.~\eqref{eq:discretized_PDO}.  
$\rho$ can describe the input function (for example, the inhomogeneity of an elliptic equation) and $T\rho T^\dagger$ represents the solution.  
Because $T$ admits a block encoding with gate complexity $O\bigl(\operatorname{poly}(pd)+\operatorname{polylog}(1/\epsilon)\bigr)$ \cite[Theorem 5]{li2023efficient}, we can efficiently block encode the solution $T\rho T^\dagger$ using the circuit shown in Fig.~\ref{fig:applications}(c), thereby enabling flexible further modifications.

\section{Discussion}\label{sec:conclusion}

To circumvent the non-physicality of non-CP maps, we propose BELT, a quantum algorithmic primitive for applying linear transformations $\mc{N}$ to a block-encoded density matrix $\rho$, obtained by generalizing the technique for block encoding a density matrix~\cite{Low2019hamiltonian,gilyn2019singular}.
BELT enables the simulation of transformations that are inaccessible to QSVT.
Applications of BELT include entanglement detection, inversion of quantum channels, and the implementation of pseudo-differential operators.

We also propose, in Appendix~\ref{section:BlockEncoding_by_Exponentiation}, an alternative approach that approximately block encodes $\mc{N}(\rho)$ when $\mc{N}$ is Hermitian-preserving, by combining QETU~\cite{dong2022ground} with HME~\cite{wei2023realizing}.

Several avenues for further investigation remain. It would be valuable to identify additional applications of BELT and to establish lower bounds on its sample complexity. For instance, BELT can be applied to simulate Lindbladian systems and stochastic differential equations. Moreover, because the sample complexity in both BELT and HME is controlled by $\bignorm{\Lambda_{\mc{N}}^{\mathrm{T}_1}}_\infty$, elucidating the operational meaning of this quantity in general or specific cases is of great importance.

\begin{acknowledgments}
We appreciate insightful discussions with Tongyang Li, Zhenhuan Liu and Yukun Zhang.
JPL acknowledges support from Innovation Program for Quantum Science and Technology (Grant No.2024ZD0300502), start-up funding from Tsinghua University and Beijing Institute of Mathematical Sciences and Applications.
ZL was supported by Beijing Natural Science Foundation Key Program (Grant No. Z220002) and by NKPs (Grant
no. 2020YFA0713000). FW, RL, YS, JL, and ZL were supported by BMSTC and ACZSP (Grant No. Z221100002722017).
\end{acknowledgments}


%

\onecolumngrid

\appendix

\section{Proof of Theorem~\ref{thm:main}}\label{app:proof_of_main}

We can write $\mc{N}(\rho)$ as (see Fig.~\ref{fig:topfigure}(d))
\begin{equation}
\begin{aligned}
\mc{N}(\rho)&=\mc{N}(\Tr_{\mbb{C}^{2^r}}\ketbra{\psi}{\psi})\\
&=\mc{N}\big((\bra{\psi}\otimes\mbb{I}_n)(\mbb{I}_r\otimes S)(\ket{\psi}\otimes\mbb{I}_n)\big)\\
&=\left(\bra{\psi}\otimes\mbb{I}_{k}\right)\left(\mbb{I}_r\otimes\Lambda_{\mc{N}}^{\mathrm{T}_1}\right)\left(\ket{\psi}\otimes\mbb{I}_k\right)\\
&=\left(\bra{0^{r+n}}U^{\rho\dagger}\otimes\mbb{I}_{k}\right)\left(\mbb{I}_r\otimes\Lambda_{\mc{N}}^{\mathrm{T}_1}\right)\left(U^{\rho}\ket{0^{r+n}}\otimes\mbb{I}_k\right),
\end{aligned}
\end{equation}
where $S$ is the swap operator on $\mc{H}\otimes\mc{H}$. Since $U_\mc{N}$ is an $(\alpha,m,\epsilon)$-block encoding of $\Lambda_\mc{N}^{\mathrm{T}_1}$, we have
\begin{equation}
\bignorm{\Lambda_\mc{N}^{\mathrm{T}_1}-\alpha(\bra{0^{m}}\otimes\mbb{I}_{n+k})U_{\mc{N}}(\ket{0^m}\otimes\mbb{I}_{n+k})}_\infty\le\epsilon.
\end{equation}
Combining the above two equations, we arrive at
\begin{equation}
\begin{aligned}
&\bignorm{\mc{N}(\rho)-\alpha\left(\bra{0^{m+r+n}}\otimes\mbb{I}_k\right)(\mbb{I}_m\otimes U^{\rho\dagger}\otimes\mbb{I}_{k})\left(U_{\mc{N}}\otimes\mbb{I}_r\right)(\mbb{I}_m\otimes U^{\rho}\otimes\mbb{I}_{k})\left(\ket{0^{m+r+n}}\otimes\mbb{I}_k\right)}_\infty\\
=&{\big|\big|}\left(\bra{0^{r+n}}U^{\rho\dagger}\otimes\mbb{I}_{k}\right)\left(\mbb{I}_r\otimes\Lambda_{\mc{N}}^{\mathrm{T}_1}\right)\left(U^{\rho}\ket{0^{r+n}}\otimes\mbb{I}_k\right)\\
&-\left(\bra{0^{r+n}}U^{\rho\dagger}\otimes\mbb{I}_{k}\right)\Big(\mbb{I}_{r}\otimes\alpha(\bra{0}\otimes\mbb{I}_{n+k})U_{\mc{N}}(\ket{0}\otimes\mbb{I}_{n+k})\Big)\left(U^{\rho}\ket{0^{r+n}}\otimes\mbb{I}_k\right){\big|\big|}_\infty\\
\le&\bignorm{\bra{0^{r+n}}U^{\rho\dagger}\otimes\mbb{I}_{k}}_\infty\bignorm{\mbb{I}_r\otimes\Lambda_{\mc{N}}^{\mathrm{T}_1}-\mbb{I}_{r}\otimes\alpha(\bra{0}\otimes\mbb{I}_{n+k})U_{\mc{N}}(\ket{0}\otimes\mbb{I}_{n+k})}_\infty\bignorm{U^{\rho}\ket{0^{r+n}}\otimes\mbb{I}_k}_\infty\\
=&\bignorm{\Lambda_\mc{N}^{\mathrm{T}_1}-\alpha(\bra{0^{m}}\otimes\mbb{I}_{n+k})U_{\mc{N}}(\ket{0^m}\otimes\mbb{I}_{n+k})}_\infty\le\epsilon.
\end{aligned}
\end{equation}
Therefore, $(\mbb{I}\otimes U^{\rho\dagger}\otimes\mbb{I}_{\mc{K}})\left(U_{\mc{N}}\otimes\mbb{I}_\mc{R}\right)(\mbb{I}\otimes U^{\rho}\otimes\mbb{I}_{\mc{K}})$ is an $(\alpha,m+r+n,\epsilon)$-block encoding of $\mc{N}(\rho)$.

\section{Proof of Theorem~\ref{thm:ED}}\label{app:proof_of_ED}
\begin{proof}

Let $\mathscr{P}_0=\{\psi_A\otimes\psi_B\}$ be the sets of product $n(=2q)$-qubit pure states, and $\mathscr{P}_1=\{\psi_{AB}\}$ be the sets of global pure $n$-qubit states.

Consider running the circuit shown in Fig.~\ref{fig:applications}(a) once.

If $\rho=\ketbra{\psi_A\otimes\psi_B}{\psi_A\otimes\psi_B}\in\mathscr{P}_0$, then $\mc{R}\otimes\mc{I}(\psi_A\otimes\psi_B)=(\mbb{I}_{q}-\psi_A)\otimes\psi_B$, and 
\begin{equation}
\begin{aligned}
\Pr[\text{get } \ket{0^{m+r+n}}]=\frac{1}{4}\Tr[\big((\mbb{I}_{q}-\psi_A)\otimes\psi_B\big)\psi_A\otimes\psi_B\big((\mbb{I}_{q}-\psi_A)\otimes\psi_B\big)]=0.
\end{aligned}
\end{equation}

If $\rho=\ketbra{\psi}{\psi}\in\mathscr{P}_1$, we have $\mc{R}\otimes\mc{I}(\psi)=\mbb{I}_q\otimes\Tr_A(\psi)-\psi$.
The average purity of the reduced density matrix $\Tr_A(\psi)$ is known to satisfy \cite{Lubkin1993average}
\begin{equation}
\underset{\psi\sim\text{Haar}}{\mathbb{E}}\Tr\left[\Tr_A(\psi)^2\right]=\frac{2\sqrt{d}}{d+1}.
\end{equation}
Note that
\begin{equation}
\begin{aligned}
\underset{\psi\sim\text{Haar}}{\mathbb{E}}\norm{\Tr_A(\psi)}_{\infty}\le&\underset{\psi\sim\text{Haar}}{\mathbb{E}}\sqrt{\Tr\left[\Tr_A(\psi)^2\right]}\\
\le&\sqrt{\underset{\psi\sim\text{Haar}}{\mathbb{E}}\Tr\left[\Tr_A(\psi)^2\right]}, 
\end{aligned}
\end{equation}
where the second inequality follows from Jensen's inequality. Thus, we have
\begin{equation}
\underset{\psi\sim\text{Haar}}{\mathbb{E}}\norm{\Tr_A(\psi)}_{\infty}\rightarrow0,\text{ as }d\rightarrow\infty.
\end{equation}
When $d$ is sufficiently large, by Markov's inequality, we can guarantee $\norm{\Tr_A(\psi)}_{\infty}\le0.01$ with a probability of at least $0.99$.
When $\norm{\Tr_A(\psi)}_{\infty}\le0.01$, we have
\begin{equation}
\begin{aligned}
&\Big|\Tr[\mc{R}\otimes\mc{I}(\psi)\psi\mc{R}\otimes\mc{I}(\psi)]-1\Big|\\
=&\Big|\Tr[\mc{R}\otimes\mc{I}(\psi)\psi\mc{R}\otimes\mc{I}(\psi)]-\Tr[(-\psi)\psi(-\psi)]\Big|\\
=&\Big|\Tr[\mc{R}\otimes\mc{I}(\psi)\psi\big(\mc{R}\otimes\mc{I}(\psi)+\psi\big)]+\Tr[\big(\mc{R}\otimes\mc{I}(\psi)+\psi\big)\psi(-\psi)]\Big|\\
\le&\Big|\Tr[\mc{R}\otimes\mc{I}(\psi)\psi\big(\mbb{I}_{q}\otimes\Tr_A(\psi)\big)]\Big|+\Big|\Tr[\big(\mbb{I}_{q}\otimes\Tr_A(\psi)\big)\psi(-\psi)]\Big|\\
\le&\norm{\mc{R}\otimes\mc{I}(\psi)}_\infty\norm{\psi}_1\norm{\Tr_A(\psi)}_\infty+\norm{\psi}_1\norm{\Tr_A(\psi)}_\infty\\
\le&2\norm{\Tr_A(\psi)}_\infty\le0.02,
\end{aligned}
\end{equation}
which implies
\begin{equation}
\begin{aligned}
\Pr[\text{get } \ket{0^{m+r+n}}]=\frac{1}{4}\Tr[\mc{R}\otimes\mc{I}(\psi)\psi\mc{R}\otimes\mc{I}(\psi)]\ge\frac{0.98}{4}=0.245.\\
\end{aligned}
\end{equation}

Therefore, when repeating the circuit shown in Fig.~\ref{fig:applications}(a) for $K$ times, if $\rho\in\mathscr{P}_0$, then $\Pr[\mathbf{0}\notin\{\mathbf{b}_1,\cdots,\mathbf{b}_K\}]=1$; if $\rho\in\mathscr{P}_0$ and $\norm{\Tr_A(\psi)}_{\infty}\le0.01$, then $\Pr[\mathbf{0}\in\{\mathbf{b}_1,\cdots,\mathbf{b}_K\}]\ge1-(1-0.245)^K=1-0.755^K$.

Combining the above derivations, the probability of success in the entanglement detection task is
\begin{equation}
\begin{aligned}
\Pr\big[\text{success}\big]=&\Pr\big[\rho\in \mathscr{P}_0,\mathbf{0}\notin\{\mathbf{b}_1,\cdots,\mathbf{b}_K\}\big]+\Pr\big[\rho\in \mathscr{P}_1,\mathbf{0}\in\{\mathbf{b}_1,\cdots,\mathbf{b}_K\}\big]\\
=&\Pr\big[\rho\in \mathscr{P}_0\big]\Pr\big[\mathbf{0}\notin\{\mathbf{b}_1,\cdots,\mathbf{b}_K\}\big|\rho\in \mathscr{P}_0\big]\\
&+\Pr\big[\rho\in \mathscr{P}_1,\mathbf{0}\in\{\mathbf{b}_1,\cdots,\mathbf{b}_K\},\norm{\Tr_A(\rho)}_{\infty}\le0.01\big]\\
&+\Pr\big[\rho\in \mathscr{P}_1,\mathbf{0}\in\{\mathbf{b}_1,\cdots,\mathbf{b}_K\},\norm{\Tr_A(\rho)}_{\infty}>0.01\big]\\
\ge&\frac{1}{2}\cdot1+\Pr\big[\rho\in \mathscr{P}_1,\mathbf{0}\in\{\mathbf{b}_1,\cdots,\mathbf{b}_K\},\norm{\Tr_A(\rho)}_{\infty}\le0.01\big]\\
=&\frac{1}{2}+\Pr\big[\rho\in \mathscr{P}_1,\norm{\Tr_A(\rho)}_{\infty}\le0.01\big]\Pr\big[\mathbf{0}\in\{\mathbf{b}_1,\cdots,\mathbf{b}_K\}\big|\rho\in \mathscr{P}_1,\norm{\Tr_A(\rho)}_{\infty}\le0.01\big]\\
\ge&\frac{1}{2}+\frac{1}{2}\cdot0.99\cdot(1-0.755^K).
\end{aligned}
\end{equation}
Therefore, for $K=2$ we have $\Pr\big[\text{success}\big]\ge2/3$.
\end{proof}

\section{Realization of $U_\mc{N}$}\label{app:sparse_BE}

Despite sample complexity, it's also important to analyze the gate complexity of BELT. The most expensive part is the realization of $U_\mc{N}$.

We say a matrix is $s$-sparse if every row and every column contains at most $s$ non-zero entries.

The following lemma from \cite{gilyn2019singular} (see also \cite{hiroka2025hardnessquantumdistributionlearning}) shows how to block encode an $s$-sparse matrix efficiently.

\begin{lemma}
Let $A\in\mbb{C}^{2^w\times 2^w}$ be an $s$-sparse matrix satisfying $\abs{A_{ij}}\le1$ for all $i,j$, and both the positions of nonzero entries and their values can be computed in polynomial time classically. Then we can implement a $(s,w+3,\epsilon)$-block encoding of $A$ using $\mc{O}\Big(\operatorname{poly}(w)+\operatorname{poly}\big(\log(s^2\epsilon^{-1})\big)\Big)$ gates and $\mc{O}\Big(\operatorname{poly}\big(\log(s^2\epsilon^{-1})\big)\Big)$ ancilla qubits.
\end{lemma}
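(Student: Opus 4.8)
The plan is to follow the now-standard sparse-access block-encoding construction of Ref.~\cite{gilyn2019singular}. The first step is to manufacture, from the hypothesis that the nonzero pattern and the values of $A$ are classically computable in polynomial time, the three sparse-access oracles: a column oracle $O_c:\ket{\ell}\ket{j}\mapsto\ket{c(\ell,j)}\ket{j}$ returning the row index of the $\ell$-th nonzero entry of column $j$, a row oracle $O_r:\ket{i}\ket{k}\mapsto\ket{i}\ket{r(i,k)}$ returning the column index of the $k$-th nonzero entry of row $i$, and a value oracle $O_A:\ket{i}\ket{j}\ket{0^b}\mapsto\ket{i}\ket{j}\ket{\widetilde A_{ij}}$ writing a $b$-bit fixed-point approximation $\widetilde A_{ij}$ of $A_{ij}$ with $|\widetilde A_{ij}-A_{ij}|\le\epsilon/s$. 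Compiling a polynomial-time classical circuit into a reversible quantum circuit costs $\mathcal O(\operatorname{poly}(w))$ gates for the index oracles and $\mathcal O(\operatorname{poly}(w)+\operatorname{poly}\log(\epsilon^{-1}))$ gates for the arithmetic inside $O_A$, with $b=\mathcal O(\log(s\epsilon^{-1}))$.

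Second, I would convert $O_A$ into an amplitude oracle: controlled on the $b$-bit register holding $\widetilde A_{ij}$, apply single-qubit rotations to one flag qubit to produce $\widetilde A_{ij}\ket{0}+\sqrt{1-|\widetilde A_{ij}|^2}\ket{1}$ (a $Y$-type rotation for the modulus and a phase gate for the argument, both legitimate because $|A_{ij}|\le1$), and then uncompute $O_A$ to restore $\ket{0^b}$. This uses $\mathcal O(b)$ rotations, or $\mathcal O(\operatorname{poly}\log(s\epsilon^{-1}))$ gates if they must be synthesized from a fixed universal gate set.

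Third---the crux---I would assemble the block-encoding unitary $U$ on $w+3$ ancilla qubits (a $w$-qubit index-selection register together with three flag/scratch qubits) as a diffusion--oracle sandwich: create the uniform superposition $s^{-1/2}\sum_{\ell=1}^{s}\ket{\ell}$ on the selection register, use $O_c$, $O_r$, and a swap of the selection and system registers to route each system basis state $\ket{j}$ onto the set of row indices $i$ with $A_{ij}\ne0$, apply the amplitude oracle to imprint $\widetilde A_{ij}$, and finally uncompute the structure oracles. The content of the argument is the algebraic identity $(\bra{0^{w+3}}\otimes\mathbb I_w)\,U\,(\ket{0^{w+3}}\otimes\mathbb I_w)=\widetilde A/s$, which I would verify by tracking amplitudes through the sandwich. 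Given it, $\widetilde A$ is $s$-sparse with entries within $\epsilon/s$ of those of $A$, and an $s$-sparse matrix with entries of modulus at most $\delta$ has operator norm at most $s\delta$; hence $\|A-\widetilde A\|_\infty\le\epsilon$ and $U$ is an $(s,w+3,\epsilon)$-block encoding of $A$. The resource count then follows: $\mathcal O(1)$ oracle calls plus $\mathcal O(\operatorname{poly}(w)+\operatorname{poly}\log(s^2\epsilon^{-1}))$ elementary gates, with the $b$-qubit value register and the rotation-synthesis workspace accounting for the $\mathcal O(\operatorname{poly}\log(s^2\epsilon^{-1}))$ transient ancillas.

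I expect the main obstacle to be that third step: choreographing the selection register, the two structure oracles, and the swap so that the top-left block of $U$ is exactly $A/s$---in particular, exploiting both the row- and the column-sparsity so that every nonzero entry is produced exactly once and with coefficient precisely $s^{-1}$ (rather than $s^{-1/2}$ or with spurious multiplicities)---together with confirming that the $\mathcal O(\log(s/\epsilon))$-bit truncation and the rotation synthesis jointly keep the operator-norm error below $\epsilon$.
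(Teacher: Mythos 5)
Your proposal is correct and is essentially the paper's own route: the paper gives no proof of this lemma, stating it as a direct import of the sparse-access block-encoding construction of Ref.~\cite{gilyn2019singular} (Lemma~48 there), and your oracle-synthesis plus diffusion--oracle-sandwich argument is a faithful reconstruction of exactly that construction, including the correct $\epsilon/s$ entrywise truncation to control the operator-norm error. The one step you flag as the crux (verifying that the sandwich yields $\widetilde A/s$ in the top-left block) is indeed where the work lies, and it goes through as in the cited reference.
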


\section{Block Encoding by Exponentiation}\label{section:BlockEncoding_by_Exponentiation}

\begin{figure*}[t]
\centering
\includegraphics[width=0.6\textwidth]{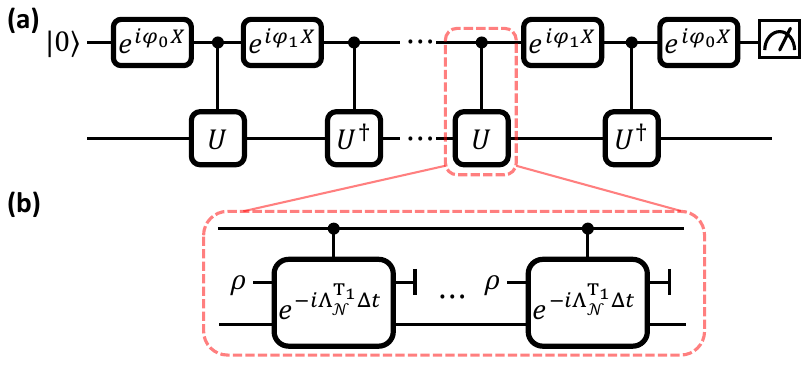}
\caption{(a) The circuit of QETU, where $U=e^{-iH}$. This circuit approximately block encodes $f(H)$. (b) For $H=\mc{N}(\rho)$, controlled-$U$ and controlled-$U^\dagger$ can be approximately realized by HME.}
\label{fig:QETU+HME}
\end{figure*}

Beyond BELT, this section presents an alternative procedure for obtaining a block encoding of $f\bigl(\mc{N}(\rho)\bigr)$ when $\mc{N}$ is a Hermitian-preserving map (rather than an arbitrary linear map, as allowed by BELT) and $f:\mathbb R\to\mathbb R$ is a real-valued function.  
The protocol combines the quantum eigenvalue transformation of unitaries with real polynomials (QETU) \cite{dong2022ground} and Hermitian-preserving map exponentiation (HME) \cite{wei2023realizing}.

\begin{lemma}[QETU \cite{dong2022ground}]\label{lemma:QETU}
Let $U=e^{-iH}$, where $H$ is an $n$-qubit Hermitian matrix.  
Consider a real function $f:\mathbb R\to\mathbb R$ and define $g(x)=2\arccos(x)$.  
If there exists a real even polynomial $F(x)$ of degree $d$ such that $\lvert F(x)\rvert\le1$ for every $x\in[-1,1]$ and 
$\sup_{x\in[\sigma_{\min},\sigma_{\max}]}\lvert(f\circ g)(x)-F(x)\rvert\le\eta$,
where $\sigma_{\min}=\cos\bigl(\lambda_{\max}(H)/2\bigr)$ and $\sigma_{\max}=\cos\bigl(\lambda_{\min}(H)/2\bigr)$,  
then a symmetric phase sequence $(\varphi_0,\varphi_1,\dots,\varphi_{d/2},\dots,\varphi_1,\varphi_0)\in\mathbb R^{d+1}$ exists such that the circuit in Fig.~\ref{fig:QETU+HME}(a) is a $(1,1,\eta)$-block encoding of $f(H)$.
\end{lemma}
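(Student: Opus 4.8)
The plan is to reduce the statement to the real quantum-signal-processing (QSP) characterization that underlies QETU \cite{dong2022ground}, together with a spectral-calculus error bound. First I would fix the spectral decomposition $H=\sum_k\lambda_k\ket{v_k}\bra{v_k}$, so $U=e^{-iH}=\sum_k e^{-i\lambda_k}\ket{v_k}\bra{v_k}$, and note that $g(x)=2\arccos(x)$ is exactly the inverse of the map $x\mapsto\cos(x/2)$ on the interval containing $\mathrm{spec}(H)$; hence $f(H)=(f\circ g)\bigl(\cos(H/2)\bigr)$, where $\cos(H/2)$ has spectrum inside $[\sigma_{\min},\sigma_{\max}]$ with $\sigma_{\min}=\cos(\lambda_{\max}(H)/2)$ and $\sigma_{\max}=\cos(\lambda_{\min}(H)/2)$. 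Consequently it suffices to produce an exact $(1,1,0)$-block encoding of the polynomial $F\bigl(\cos(H/2)\bigr)$ and then invoke the spectral mapping theorem: $\bignorm{F(\cos(H/2))-f(H)}_\infty\le\sup_{x\in[\sigma_{\min},\sigma_{\max}]}\abs{F(x)-(f\circ g)(x)}\le\eta$, which promotes the encoding to a $(1,1,\eta)$-block encoding of $f(H)$.

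Second, I would analyze the circuit in Fig.~\ref{fig:QETU+HME}(a), which interleaves single-ancilla-qubit rotations $e^{-i\varphi_j X}$ with controlled applications of $U$ and $U^\dagger$. Restricting to a joint eigenspace $\ket{v_k}$ of $H$, the system register contributes only scalars $e^{\mp i\lambda_k}$, so the whole circuit collapses to a product of $2\times 2$ unitaries on the ancilla — a QSP sequence whose ``signal'' unitary, after the Hadamard conjugation built into the controlled block, is $e^{i(\lambda_k/2)Z}$, i.e.\ carries signal parameter $\cos(\lambda_k/2)$. The key step is then to invoke the real-QSP theorem used in \cite{dong2022ground} (itself a consequence of the QSP/qubitization results of \cite{gilyn2019singular,Low2017optimal}): a symmetric phase sequence $(\varphi_0,\dots,\varphi_{d/2},\dots,\varphi_0)\in\mathbb R^{d+1}$ produces, in the $\bra{0}\cdot\ket{0}$ ancilla block, the value $F(\cos(\lambda_k/2))$ for a real even polynomial $F$ of degree $d$ with $\abs{F}\le1$ on $[-1,1]$; and conversely every such polynomial is realized by some symmetric phase sequence of that length. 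Summing over $k$ gives $(\bra{0}\otimes\mbb{I}_n)\,\mathcal U_{\mathrm{QETU}}\,(\ket{0}\otimes\mbb{I}_n)=F(\cos(H/2))$ exactly.

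Third, I would combine the two ingredients: by hypothesis the approximant $F$ is real, even, of degree $d$, and bounded by $1$ on $[-1,1]$, so the QSP existence statement supplies a symmetric phase sequence whose circuit exactly block-encodes $F(\cos(H/2))$ with one ancilla qubit; the first step's error bound then yields the claimed $(1,1,\eta)$-block encoding of $f(H)$.

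The part I expect to require the most care is the eigenspace bookkeeping of the QSP reduction: tracking the global phases $e^{\mp i\lambda_k}$, verifying that the Hadamard conjugation puts the controlled-$U$ block into the $Z$-rotation convention so that the signal parameter is precisely $\cos(\lambda_k/2)$ (not $\cos\lambda_k$ or a shifted argument), and confirming that the branch of $\arccos$ in $g$ matches the monotone branch of $\cos(\cdot/2)$ on the interval containing $\mathrm{spec}(H)$, which is exactly what makes $f(H)=(f\circ g)(\cos(H/2))$ an identity rather than an approximation. The QSP existence/characterization theorem (symmetric real phase sequences $\leftrightarrow$ bounded real even polynomials) I would cite rather than reprove.
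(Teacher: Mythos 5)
The paper does not prove this lemma at all: it is imported verbatim from the QETU paper \cite{dong2022ground}, with the citation standing in for the proof. Your sketch is a correct reconstruction of the argument given in that reference --- eigenspace reduction to single-ancilla QSP with signal parameter $\cos(\lambda_k/2)$, the real-even-polynomial characterization of symmetric phase sequences, and the spectral-calculus bound $\bignorm{F(\cos(H/2))-f(H)}_\infty\le\eta$ --- and you rightly flag the only delicate point, namely that the identity $f(H)=(f\circ g)(\cos(H/2))$ requires the spectrum of $H$ to lie in the range where $2\arccos$ inverts $\cos(\cdot/2)$, an assumption implicit in the lemma's definition of $[\sigma_{\min},\sigma_{\max}]$.
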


\begin{lemma}[HME \cite{wei2023realizing}]\label{lemma:HME}
Let $\mc{N}$ be a Hermitian-preserving map.  
There exists a quantum algorithm that, using $\mathcal O\bigl(\epsilon^{-1}\|\Lambda_{\mc{N}}^{\mathrm{T}_1}\|_\infty^{2}t^{2}\bigr)$ copies of a quantum  state $\rho$, implements a quantum channel approximating $e^{-i\mc{N}(\rho)t}$ up to $\epsilon$ error in diamond distance.
\end{lemma}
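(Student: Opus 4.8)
The plan is to extend sample-based Hamiltonian simulation (density-matrix exponentiation) by replacing the swap operator with a fixed \emph{generalized swap} built from $\mc{N}$; when $\mc{N}=\mc{I}$ this recovers the usual construction, since $\Lambda_{\mc{I}}^{\mathrm{T}_1}=S$. Write $C:=\Lambda_{\mc{N}}^{\mathrm{T}_1}$. The first step is to record two structural facts. Since $\mc{N}$ is Hermitian-preserving, $\Lambda_{\mc{N}}$ is Hermitian, and partial transposition preserves Hermiticity, so $C=C^{\dagger}$ and $e^{-iC\tau}$ is a genuine unitary for all $\tau\in\mbb{R}$. And from the partially transposed Choi relation $\mc{N}(\rho)=\Tr_{1}\!\big[(\rho\otimes\mbb{I})\,C\big]$ one obtains, by a short index computation, the ``generalized swap'' identities
\begin{equation}
\Tr_{1}\!\big[C(\rho\otimes\sigma)\big]=\mc{N}(\rho)\,\sigma,
\qquad
\Tr_{1}\!\big[(\rho\otimes\sigma)\,C\big]=\sigma\,\mc{N}(\rho),
\end{equation}
valid for every $\sigma$ on the output space; the second follows from the first by taking adjoints, using that $C,\rho,\sigma,\mc{N}(\rho)$ are all Hermitian. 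Consequently $\Tr_{1}\big[[C,\rho\otimes\sigma]\big]=[\mc{N}(\rho),\sigma]$, the exact analogue of the swap identity underlying density-matrix exponentiation.

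The second step is to analyze the one-step channel $\Phi_{\tau}(\sigma):=\Tr_{1}\!\big[e^{-iC\tau}(\rho\otimes\sigma)e^{iC\tau}\big]$, which is CPTP and consumes a single copy of $\rho$. Writing $e^{-iC\tau}=\mbb{I}-iC\tau+R$ with $\|R\|_{\infty}=O(\|C\|_{\infty}^{2}\tau^{2})$, and using the identities above together with the fact that $\Tr_{1}$ is non-increasing in trace norm, one gets
\begin{equation}
\Phi_{\tau}(\sigma)=\sigma-i\tau[\mc{N}(\rho),\sigma]+O\!\big(\|C\|_{\infty}^{2}\tau^{2}\big)=e^{-i\mc{N}(\rho)\tau}\,\sigma\, e^{i\mc{N}(\rho)\tau}+O\!\big(\|C\|_{\infty}^{2}\tau^{2}\big),
\end{equation}
where the last equality uses $\|\mc{N}(\rho)\|_{\infty}\le\|C\|_{\infty}=\bignorm{\Lambda_{\mc{N}}^{\mathrm{T}_1}}_{\infty}$ (immediate from $\mc{N}(\rho)=\Tr_{1}[(\rho\otimes\mbb{I})C]$ and $\|\rho\|_{1}=1$). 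Since $C$ acts trivially on any reference system, this estimate is uniform over purified inputs, so $\Phi_{\tau}$ is $O(\|C\|_{\infty}^{2}\tau^{2})$-close to the conjugation channel $\mc{U}_{\tau}(\sigma):=e^{-i\mc{N}(\rho)\tau}\sigma e^{i\mc{N}(\rho)\tau}$ in diamond distance.

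The final step is to set $\tau=t/m$ and compose $m$ independent copies of $\Phi_{\tau}$. Sub-multiplicativity of the diamond norm under composition yields $\bignorm{\Phi_{\tau}^{\circ m}-\mc{U}_{t}}_{\diamond}\le m\cdot O(\|C\|_{\infty}^{2}\tau^{2})=O\!\big(\bignorm{\Lambda_{\mc{N}}^{\mathrm{T}_1}}_{\infty}^{2}t^{2}/m\big)$, so choosing $m=O\!\big(\epsilon^{-1}\bignorm{\Lambda_{\mc{N}}^{\mathrm{T}_1}}_{\infty}^{2}t^{2}\big)$ makes the error at most $\epsilon$, and the total number of copies of $\rho$ is exactly $m$. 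The unitary $e^{-iC\tau}$ is a fixed, $\rho$-independent operator realizable by ordinary Hamiltonian simulation; implementing it to precision $\epsilon/m$ per step costs only polylogarithmic extra gates and no further samples, so the stated copy complexity is unaffected. I expect the main obstacle to be bookkeeping rather than conceptual: one must place the partial transpose in the correct slot of the generalized-swap identity, confirm Hermiticity of $C$ so that each step is a legal channel, and verify that the per-step error bound is uniform enough to upgrade a state-wise estimate to a genuine diamond-distance guarantee.
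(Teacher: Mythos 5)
The paper does not prove this lemma itself—it imports it verbatim from Ref.~\cite{wei2023realizing}—and your argument is precisely the generalized Lloyd--Mohseni--Rebentrost construction that underlies that cited result: Hermiticity of $\Lambda_{\mc N}^{\mathrm T_1}$ makes $e^{-i\Lambda_{\mc N}^{\mathrm T_1}\tau}$ a valid interaction unitary, the generalized-swap identities give the commutator at first order, and Trotter-style composition of $m=O(\epsilon^{-1}\|\Lambda_{\mc N}^{\mathrm T_1}\|_\infty^2 t^2)$ single-copy steps yields the stated diamond-norm bound. Your reconstruction is correct and takes essentially the same route as the source; the only place deserving one more line of care is upgrading the per-step second-order estimate to a diamond-norm bound (i.e., carrying the reference system through the expansion), which works exactly as you indicate.
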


\begin{theorem}\label{thm:BE_by_QETU_HME}
With $H=\mc{N}(\rho)$ and the other settings as in Lemma~\ref{lemma:QETU}, the protocol of Fig.~\ref{fig:QETU+HME} realizes a $(1,1,\eta)$-block encoding of $f\bigl(\mc{N}(\rho)\bigr)$ to error $\epsilon$, using
$\mc{O}\big(\epsilon^{-1}d^2\bignorm{\Lambda_\mc{N}^{\mathrm{T}_1}}_{\infty}^{2}\big)$
copies of $\rho$.
\end{theorem}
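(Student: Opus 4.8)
The plan is to run the QETU circuit of Lemma~\ref{lemma:QETU} with $H=\mc{N}(\rho)$, but to replace every call to controlled-$e^{-iH}$ and controlled-$e^{iH}$ by its HME-simulated approximation (Lemma~\ref{lemma:HME}), and then to bound the accumulated error. Since $\mc{N}$ is Hermitian-preserving and $\rho$ is a density matrix, $H=\mc{N}(\rho)$ is Hermitian, so $U=e^{-iH}$ is a genuine unitary and Lemma~\ref{lemma:QETU} applies verbatim: with the stated even polynomial $F$ of degree $d$ and the associated symmetric phase sequence, the \emph{exact} QETU circuit is a $(1,1,\eta)$-block encoding of $f(H)=f\bigl(\mc{N}(\rho)\bigr)$. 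The only non-physical ingredient is $U$ itself, which we now produce approximately from copies of $\rho$.

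First I would handle the controlled exponential. HME as stated in Lemma~\ref{lemma:HME} realizes the channel $X\mapsto e^{-i\mc{N}(\rho)t}\,X\,e^{i\mc{N}(\rho)t}$, whereas QETU needs the coherently controlled unitary $\ket{0}\!\bra{0}\otimes\mbb{I}+\ket{1}\!\bra{1}\otimes e^{-i\mc{N}(\rho)t}$ and its inverse. The HME procedure is a concatenation of short-time primitives, each consuming $\mc{O}(1)$ copies of $\rho$ and effecting $e^{-i\mc{N}(\rho)\delta}$ to the appropriate order; conditioning every primitive on the QETU ancilla turns the concatenation into an approximation of the controlled exponential, with the same copy budget up to constants and with an error still controlled in diamond distance by the same analysis behind Lemma~\ref{lemma:HME} (this is the content suggested by Fig.~\ref{fig:QETU+HME}(b)). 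Taking $t=1$ and demanding diamond-error $\epsilon'$ for each controlled gate, the per-gate cost from Lemma~\ref{lemma:HME} is $\mc{O}\!\bigl((\epsilon')^{-1}\bignorm{\Lambda_\mc{N}^{\mathrm{T}_1}}_\infty^{2}\bigr)$ copies of $\rho$.

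Next I would assemble the full circuit and track the error budget. The degree-$d$ QETU circuit contains $\Theta(d)$ applications of controlled-$U$ or controlled-$U^\dagger$, interleaved with single-qubit $e^{i\varphi_j X}$ rotations on the ancilla, which are exact. Setting $\epsilon'=\Theta(\epsilon/d)$ and invoking subadditivity of the diamond distance under composition of quantum channels, the implemented circuit deviates from the ideal QETU channel by at most $\Theta(d)\cdot\mc{O}(\epsilon/d)=\mc{O}(\epsilon)$ in diamond norm; hence it is $\epsilon$-close to a $(1,1,\eta)$-block encoding of $f\bigl(\mc{N}(\rho)\bigr)$. Multiplying the $\Theta(d)$ gate count by the per-gate cost $\mc{O}\bigl(d\,\epsilon^{-1}\bignorm{\Lambda_\mc{N}^{\mathrm{T}_1}}_\infty^{2}\bigr)$ yields the claimed $\mc{O}\bigl(d^{2}\epsilon^{-1}\bignorm{\Lambda_\mc{N}^{\mathrm{T}_1}}_\infty^{2}\bigr)$ copies of $\rho$.

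The main obstacle is the controlled-$U$ construction: one must verify that HME can be made to output a coherently controlled exponential rather than merely the channel, while preserving both the $\mc{O}(1)$-copies-per-primitive scaling and the diamond-norm error guarantee. A secondary bookkeeping concern is to keep the intrinsic QETU approximation error $\eta$ (from the polynomial approximation of $f\circ g$) cleanly separated from the channel-simulation error $\epsilon$, and to confirm that the ancilla rotations and the final post-selection structure of QETU are unaffected by substituting a channel approximation for $U$.
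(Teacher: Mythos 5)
Your proposal follows essentially the same route as the paper's proof: replace each controlled-$U$ and controlled-$U^\dagger$ in the QETU circuit by its HME realization (the unit cell of Fig.~\ref{fig:QETU+HME}(b)) at per-gate diamond-norm accuracy $d^{-1}\epsilon$, costing $\mc{O}\bigl(d\,\epsilon^{-1}\bignorm{\Lambda_\mc{N}^{\mathrm{T}_1}}_{\infty}^{2}\bigr)$ copies per gate, then use subadditivity over the $\Theta(d)$ gates to get total error $\epsilon$ and total cost $\mc{O}\bigl(d^{2}\epsilon^{-1}\bignorm{\Lambda_\mc{N}^{\mathrm{T}_1}}_{\infty}^{2}\bigr)$. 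Your explicit flagging of the controlled-versus-uncontrolled issue is a point the paper delegates to Fig.~\ref{fig:QETU+HME}(b) without further comment, but the argument and bookkeeping are otherwise identical.
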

\begin{proof}
By HME, 
$\mathcal O\bigl((d^{-1}\epsilon)^{-1}\bignorm{\Lambda_\mc{N}^{\mathrm{T}_1}}_{\infty}^{2}\bigr)$ 
repetitions of the unit cell in Fig.~\ref{fig:QETU+HME}(b), and hence the same number of copies of $\rho$, suffice to implement a CPTP map that approximates each controlled-$U$ unitary in Fig.~\ref{fig:QETU+HME}(a), where $U=e^{-i\mc{N}(\rho)}$, to diamond-norm accuracy $d^{-1}\epsilon$.  
The controlled-$U^\dagger$ gates are handled in the same way.  
Overall, the protocol consumes 
$\mathcal O\bigl(\epsilon^{-1}d^{2}\bignorm{\Lambda_\mc{N}^{\mathrm{T}_1}}_{\infty}^{2}\bigr)$ 
copies of $\rho$.  
Replacing every controlled-$U$ and controlled-$U^\dagger$ with the circuits of Fig.~\ref{fig:QETU+HME}(b) adds at most $d(d^{-1}\epsilon)=\epsilon$ total approximation error \cite{watrous2018theory}.
\end{proof}

Compared to BELT, the protocol derived from Theorem~\ref{thm:BE_by_QETU_HME} offers two advantages:  
(i) it requires only copies of $\rho$, not a unitary oracle preparing a purification of $\rho$; 
(ii) only a single qubit should be measured at the end of the circuit.  
Its drawbacks are that  
(i) $\mc{N}$ must be Hermitian-preserving, and  
(ii) the HME step introduces additional approximation error, increasing the sample complexity.  
Table~\ref{tab:two_methods_comparison} presents a comparison of the two methods.

\begin{table*}[htbp]
\centering
\resizebox{0.92\textwidth}{!}{
\begin{tabular}{c|c|c|c|c}
\hline
\hline
 & Input type & Technique & Accuracy & Maps $\mc{N}$ to simulate \\
\hline
BELT
& $U^\rho$ and $(U^\rho)^\dagger$ & 
generalizing block encoding of $\rho$ \cite{Low2019hamiltonian,gilyn2019singular}
& exact & linear \\
\hline
Theorem~\ref{thm:BE_by_QETU_HME} & $\rho$ & QETU + HME & arbitrary small error $\epsilon$ & Hermitian-preserving \\
\hline
\hline
\end{tabular}
}
\caption{Comparison of two methods to block encoding $\mc{N}(\rho)$.}
\label{tab:two_methods_comparison}
\end{table*}

\end{document}